\documentclass[3p,11pt]{elsarticle}

\newcommand{\revisedA}[1]{#1\xspace}
\newcommand{\markRevisedA}{}

\usepackage{latexsym}
\usepackage{xspace}
\usepackage{amssymb}
\usepackage{amsmath}
\usepackage{stmaryrd}
\usepackage{hyperref}
\usepackage{url}
\usepackage{graphicx}
\input{xy}
\xyoption{all}
\usepackage{calc}
\usepackage{tikz}
\usepackage{flushend}
\usepackage{color}
\usepackage{framed}

\newtheorem{theorem}{Theorem}[section]
\newtheorem{lemma}[theorem]{Lemma}
\newtheorem{proposition}[theorem]{Proposition}
\newtheorem{corollary}[theorem]{Corollary}

\newtheorem{Definition}[theorem]{Definition}
\newtheorem{Example}[theorem]{Example}
\newtheorem{Remark}[theorem]{Remark}

\newenvironment{example}{\begin{Example}\begin{em}}{\end{em}\end{Example}}
\newenvironment{remark}{\begin{Remark}\begin{em}}{\end{em}\end{Remark}}
\newproof{proof}{Proof}

\def\eqref#1{(\ref{#1})}
\def\defeq{\stackrel{\mathrm{def}}{=}}
\def\tuple#1{\langle#1\rangle}

\newcommand{\E}{\exists}

\newcommand{\mL}{\mathcal{L}}

\newcommand{\mR}{\mathcal{R}}

\newcommand{\mF}{\mathcal{F}}

\newcommand{\mM}{\mathcal{M}}
\newcommand{\mMp}{{\mathcal{M}'\!}}

\newcommand{\NN}{\mathbb{N}}

\newcommand{\myend}{\mbox{}\hfill{\footnotesize$\Box$}}

\newcommand{\comment}[1]{}

\newcommand{\fand}{\varotimes}

\newcommand{\fto}{\Rightarrow}
\newcommand{\fequiv}{\Leftrightarrow}

\newcommand{\SP}{\Sigma_P}
\newcommand{\SA}{\Sigma_A}

\newcommand{\fPDLx}{$\mathit{fPDL}$\xspace}
\newcommand{\fKx}{$\mathit{fK}$\xspace}
\newcommand{\fPDL}{$\mathit{fPDL}_{\scriptscriptstyle \revisedA{\fand}}$\xspace}
\newcommand{\fK}{$\mathit{fK}_{\!\scriptscriptstyle \revisedA{\fand}}$\xspace}

\newcommand{\DeltaM}{\Delta^{\!\mathcal{M}}}
\newcommand{\DeltaMp}{\Delta^{\!\mathcal{M}'\!}}

\newcommand{\Vmin}{V_{min}}
\newcommand{\Vmax}{V_{max}}

\newcommand{\mN}{\mathcal{N}}
\newcommand{\Mz}{M_0}

\journal{arXiv}

\begin{document}
\sloppy
	
\begin{frontmatter}
		
\title{Fuzzy directed simulations for fuzzy modal logics over residuated lattices}

\author{Linh Anh Nguyen}
\ead{nguyen@mimuw.edu.pl}
\ead{nalinh@ntt.edu.vn}				
\address{Institute of Informatics, University of Warsaw, Banacha 2, 02-097 Warsaw, Poland}
\address{Faculty of Information Technology, Nguyen Tat Thanh University, Ho Chi Minh City, Vietnam}

\begin{abstract}
We introduce the notion of fuzzy directed simulation between fuzzy Kripke models over \revisedA{linear and complete residuated lattices} and investigate its fundamental properties. In particular, we prove that all positive formulas of \revisedA{a fuzzy extension of propositional dynamic logic} are preserved under fuzzy directed simulations and establish a Hennessy--Milner theorem for this notion.
Furthermore, we present a method for computing the greatest fuzzy directed simulation between two finite fuzzy Kripke models and implement it for the case where the underlying residuated lattice is the G\"odel, product, or \L{}ukasiewicz structure. Finally, we experimentally evaluate the performance of the implementation and present the obtained results.
\end{abstract}

\begin{keyword}
fuzzy directed simulation \sep fuzzy modal logic \sep residuated lattice
\end{keyword}

\end{frontmatter}


\section{Introduction}
\label{section:intro}

Bisimulation and simulation are well-known notions in computer science. Bisimulations are used to characterize the equivalence of states in Kripke models \cite{vanBenthemCorr,vBenthem76,vBenthem83} and transition systems \cite{HennessyM85,Park81}. From a logical perspective, modal formulas are invariant under bisimulations, and, given two image-finite Kripke models, the greatest relation between their domains under which all modal formulas are invariant coincides with the greatest bisimulation between the models. The latter property is called the Hennessy--Milner property of bisimulations. Simulations are used to compare the observable behavior of states in transition systems and automata \cite{Park81,fac/He89}. From a logical perspective, positive existential modal formulas are preserved under simulations, and, given two image-finite Kripke models, the greatest relation between their domains that preserves all positive existential modal formulas coincides with the greatest simulation between the models~\cite{BRV2001}. The notion of directed simulation is closely related to (bi)simulation and has been studied in the context of modal and description logics \cite{KurtoninaR97,BSDL-P-LOGCOM}. Positive modal formulas are preserved under directed simulations, and, given two image-finite Kripke models, the greatest relation between their domains that preserves all positive modal formulas is precisely the greatest directed simulation between the models.

To deal with vagueness and imprecision, fuzzy systems and fuzzy logics are often used as generalizations of their crisp counterparts. Accordingly, the notions of bisimulation, simulation, and directed simulation have been extended to the fuzzy setting, although one notable gap remains. These extensions follow two main approaches, depending on whether the relations themselves are required to be crisp or are allowed to take fuzzy truth values. From the logical perspective, if the underlying modal logic includes involutive negation and/or the Baaz projection operator, then the greatest fuzzy bisimulation or simulation between two image-finite Kripke models or transition systems is necessarily crisp~\cite{Fan15,FSS2020,DBLP:conf/fuzzIEEE/NguyenN21,DBLP:journals/tfs/NguyenN23,DBLP:journals/cas/NguyenN22}. 

Crisp bisimulations have been investigated for fuzzy transition systems~\cite{CaoCK11,CaoSWC13,DBLP:journals/fss/WuD16,DBLP:journals/ijar/WuCHC18,DBLP:journals/fss/WuCBD18}, weighted automata~\cite{DamljanovicCI14}, fuzzy modal logics~\cite{EleftheriouKN12,Fan15}, and fuzzy description logics~\cite{FSS2020,DBLP:journals/tfs/NguyenN23}. Fuzzy bisimulations have been studied for fuzzy automata~\cite{CiricIDB12,CiricIJD12,ijar/Nguyen23}, weighted and fuzzy social networks~\cite{ai/FanL14,IgnjatovicCS15}, fuzzy modal logics~\cite{Fan15,FBSML}, and fuzzy description logics~\cite{FSS2020,TFS2020}. Similarly, crisp simulations have been considered for fuzzy transition systems~\cite{DBLP:journals/ijar/PanLC15,fss/WuD16,DBLP:conf/fuzzIEEE/NguyenN21,DBLP:journals/jifs/Nguyen22} and weighted automata~\cite{DamljanovicCI14}, whereas fuzzy simulations have been investigated for fuzzy transition systems~\cite{DBLP:journals/ijar/PanC0C14,DBLP:journals/ijar/PanLC15}, fuzzy automata~\cite{CiricIDB12,CiricIJD12,TFS2020,ijar/Nguyen23}, fuzzy social networks~\cite{IgnjatovicCS15}, fuzzy modal logics~\cite{DBLP:journals/cas/NguyenN22}, and fuzzy description logics~\cite{TFS2020}. Crisp directed simulations have also been studied for fuzzy transition systems~\cite{DBLP:conf/fuzzIEEE/NguyenN21}. 
For further details on the logical characterization and computation of bisimulations, simulations, and directed simulations for fuzzy systems, we refer the reader to the related work sections of~\cite{FBSML,DBLP:journals/tfs/NguyenN23,DBLP:journals/cas/NguyenN22,DBLP:journals/jifs/Nguyen22}.

\subsection{Motivation}

To the best of our knowledge, fuzzy directed simulations have not yet been investigated. This gap motivates the present work. First, introducing this notion completes the picture of bisimulations, simulations, and directed simulations for fuzzy systems. Second, since fuzzy bisimulations have been successfully applied to the minimization of fuzzy interpretations in fuzzy description logics~\cite{minimization-by-fBS,corr/abs-2510-21423}, fuzzy directed simulations may provide a suitable foundation for studying the minimization of fuzzy Kripke models while preserving positive modal formulas, as well as the minimization of fuzzy interpretations in fuzzy description logics while preserving positive concept assertions.

\subsection{Our contributions}

In this work, we introduce the notion of fuzzy directed simulation between fuzzy Kripke models over \revisedA{linear and complete residuated lattices} and investigate its fundamental properties. In particular, we prove that all positive formulas of \revisedA{a fuzzy extension of propositional dynamic logic} are preserved under fuzzy directed simulations and establish a Hennessy--Milner theorem for this notion.

Furthermore, we present a method for computing the greatest fuzzy directed simulation between two finite fuzzy Kripke models. It is based on a reduction to the problem of computing the greatest correct marking of a finite fuzzy minimax net~\cite{DBLP:journals/tfs/NguyenMS23}.

Finally, we implement the method for the case where the underlying residuated lattice is the G\"odel, product, or \L{}ukasiewicz structure. We experimentally evaluate its performance and report the obtained results.

\subsection{The structure of this work}

The remainder of this work is organized as follows. Section~\ref{section: prel} introduces the necessary preliminaries. Section~\ref{section: fbir} defines fuzzy directed simulations between fuzzy Kripke models and studies their properties. Section~\ref{section: computation} presents the method for computing the greatest fuzzy directed simulation between two finite fuzzy Kripke models. Section~\ref{section: impl and performance} describes the implementation and reports the results of the performance evaluation. Section~\ref{section: conc} concludes the work. \ref{section: appendix} contains the proofs of the theoretical results.


\section{Preliminaries}
\label{section: prel}

This section introduces the necessary background on residuated lattices, fuzzy sets, fuzzy relations, fuzzy modal logics, and fuzzy minimax nets. The material is adapted from~\cite{FBSML,DBLP:journals/tfs/NguyenMS23}.

\subsection{Residuated lattices and fuzzy sets}

A {\em residuated lattice} \cite{Hajek1998,Belohlavek2002} is an algebra $\mL = \tuple{L, \leq, \fand, \fto, 0, 1}$ such that 
\begin{itemize}
\item $\tuple{L, \leq, 0, 1}$ is a lattice with the least element 0 and the greatest element 1,
\item $\tuple{L, \fand, 1}$ is a commutative monoid, 
\item for every $x, y, z \in L$,
\begin{equation}
x \fand y \leq z \ \ \textrm{iff}\ \ x \leq (y \fto z). \label{fop: GDJSK 00} 
\end{equation}
\end{itemize}

Given a residuated lattice \mbox{$\mL = \tuple{L, \leq, \fand, \fto, 0, 1}$}, let $\land$ and $\lor$ denote the {\em meet} and {\em join} operators associated with the lattice. By $x \fequiv y$ we denote \mbox{$(x \fto y) \land (y \fto x)$}. The operations $\fto$ and $\fequiv$ are called {\em residuum} and {\em biresiduum}, respectively. We use the convention that $\fand$ and $\land$ bind stronger than $\lor$, which in turn binds stronger than $\fto$ and $\fequiv$. 

A residuated lattice $\mL = \tuple{L, \leq, \fand, \fto, 0, 1}$ is {\em linear} (respectively, {\em complete}) if the bounded lattice \mbox{$\tuple{L, \leq, 0, 1}$} is linear (respectively, complete). 
Popular linear and complete residuated lattices include the G\"odel, product, and {\L}ukasiewicz structures, with $L = [0,1]$ and $\fand$ being a t-norm. They are specified below:
\begin{center}
	\begin{tabular}{|c|c|c|c|}
		\hline
		& G\"odel & Product & {\L}ukasiewicz \\
		\hline
		$x \fand y$ & $\min\{x,y\}$ & $x \cdot y$ & $\max\{x+y-1, 0\}$ \\
		\hline
		$x \fto y$ 
		& 
		\(
		\left\{
		\!\!\!\begin{array}{ll}
		y & \textrm{if $x > y$} \\ 
		1 & \textrm{otherwise}
		\end{array}\!\!\!
		\right.
		\)
		& 
		\(
		\left\{
		\!\!\!\begin{array}{ll}
		y/x & \textrm{if $x > y$} \\ 
		1 & \textrm{otherwise}
		\end{array}\!\!\!
		\right.
		\)	
		& $\min\{y - x + 1, 1\}$
		\\ \hline
	\end{tabular}
\end{center}

From now on, let $\mL = \tuple{L, \leq, \fand, \fto, 0, 1}$ be an arbitrary linear and complete residuated lattice. 

We say that the multiplication $\fand$ is {\em right-continuous} if, for every $x \in L$ and $Y \subseteq L$, 
\[ x \fand \bigwedge\!Y = \bigwedge_{y \in Y} (x \fand y). \]
Clearly, all the G\"odel, product, and {\L}ukasiewicz t-norms are right-continuous. 

\begin{lemma}[cf.~\cite{FBSML,Belohlavek2002,Hajek1998}]\label{lemma: JHFJWx}
The following properties hold for all $a,a',b,b',c \in L$:
\begin{eqnarray}
(1 \fto a) & = & a \label{eq: JHFJWx 0} \\
(a \fto (b \fto c)) & = & (b \fto (a \fto c)) \label{eq: JHFJWx 1} \\
(a \fto (b \fto c)) & \leq & (a \fand b \fto c) \label{eq: JHFJWx 10}  \\
a \fand (b \fto c) & \leq & ((a \fto b) \fto c) \label{eq: JHFJWx 20} \\
a \fand (b \fto c) & \leq & (b \fto a \fand c) \label{eq: JHFJWx 30} \\
(a \fto a') \land (b \fto b') & \leq & (a \land b \fto a' \land b') \label{eq: JHFJWx 40} \\
(a \fto a') \land (b \fto b') & \leq & (a \lor b \fto a' \lor b') \label{eq: JHFJWx 50} \\
(a \fto b) & \leq & ((c \fto a) \fto (c \fto b)) \label{eq: JHFJWx 60} \\
\revisedA{(a \fto b)} & \revisedA{\leq} & \revisedA{(c \fand a \fto c \fand b)} \label{eq: JHFJWx 65} \\
\revisedA{(a \fequiv b)} & \revisedA{\leq} & \revisedA{(c \fand a \fequiv c \fand b)} \label{eq: JHFJWx 66}
\end{eqnarray}
\end{lemma}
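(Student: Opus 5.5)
All ten properties follow from the adjunction \eqref{fop: GDJSK 00}, together with commutativity, associativity and monotonicity of $\fand$. The general strategy is: to show $u \leq (v \fto w)$, show $u \fand v \leq w$. We also use two derived facts. The first is modus ponens, $a \fand (a \fto b) \leq b$, obtained by taking $x = (a \fto b)$ in \eqref{fop: GDJSK 00}. The second is that $\fand$ is monotone: $x \leq y$ implies $x \fand z \leq y \fand z$, since $y \fand z \leq y \fand z$ gives $y \leq (z \fto y \fand z)$, hence $x \leq (z \fto y\fand z)$. Linearity and completeness are not needed here.

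\textbf{The identities \eqref{eq: JHFJWx 0} and \eqref{eq: JHFJWx 1}.} For \eqref{eq: JHFJWx 0}, $x \leq (1 \fto a)$ holds iff $x \fand 1 \leq a$, iff $x \leq a$. Taking $x = a$ and $x = (1\fto a)$ gives the two inequalities. For \eqref{eq: JHFJWx 1}, we have $x \leq (a \fto (b \fto c))$ iff $x \fand a \leq (b \fto c)$ iff $x \fand a \fand b \leq c$. This condition is symmetric in $a,b$ by commutativity and associativity, so both sides have the same lower bounds and are therefore equal.

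\textbf{The inequalities \eqref{eq: JHFJWx 10}--\eqref{eq: JHFJWx 30}.} For \eqref{eq: JHFJWx 10}, the same chain shows $x \leq (a\fto(b\fto c))$ iff $x \fand (a \fand b) \leq c$ iff $x \leq (a\fand b \fto c)$; this even gives equality. For \eqref{eq: JHFJWx 20}, it suffices that $a \fand (b\fto c) \fand (a \fto b) \leq c$. Modus ponens gives $a \fand (a\fto b) \leq b$, and then $b \fand (b \fto c) \leq c$, using monotonicity of $\fand$. For \eqref{eq: JHFJWx 30}, we need $a \fand (b\fto c)\fand b \leq a \fand c$, which follows from modus ponens and monotonicity.

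\textbf{The lattice and composition inequalities \eqref{eq: JHFJWx 40}--\eqref{eq: JHFJWx 66}.} For \eqref{eq: JHFJWx 40}, let $u = (a\fto a')\land(b\fto b')$. Then $u \fand (a\land b) \leq (a \fto a')\fand a \leq a'$, and similarly $\leq b'$, so the product is $\leq a'\land b'$. For \eqref{eq: JHFJWx 50}, it suffices to show $a \leq (u \fto a'\lor b')$ and $b \leq (u \fto a' \lor b')$, since then $a\lor b$ lies below that residuum, and we conclude by residuation and commutativity. Each of these reduces to $u \fand a \leq a' \leq a'\lor b'$. For \eqref{eq: JHFJWx 60}, we need $(a\fto b)\fand(c\fto a)\fand c \leq b$, which follows from two applications of modus ponens. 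For \eqref{eq: JHFJWx 65}, we need $(a\fto b)\fand c\fand a \leq c\fand b$, which follows from modus ponens and monotonicity. Finally, \eqref{eq: JHFJWx 66} is obtained by applying \eqref{eq: JHFJWx 65} in both directions and taking meets, using monotonicity of $\land$.

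The only step that is not purely mechanical is \eqref{eq: JHFJWx 50}, because a join sits in the antecedent of the residuum. It is handled by the observation that $x \mapsto (x \fto z)$ turns joins into meets, which follows directly from residuation. Everything else is routine unfolding of \eqref{fop: GDJSK 00}.
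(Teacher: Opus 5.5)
Your proof is correct and follows essentially the same route as the paper. The paper derives \eqref{eq: JHFJWx 0} and \eqref{eq: JHFJWx 65} directly from the residuation property \eqref{fop: GDJSK 00} together with commutativity, associativity and monotonicity of $\fand$, obtains \eqref{eq: JHFJWx 66} from \eqref{eq: JHFJWx 65}, and cites or adapts the remaining items from the literature; you carry out all of these unfoldings explicitly, and your remarks that \eqref{eq: JHFJWx 10} in fact holds with equality and that linearity and completeness are not needed are also correct.
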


The equality \eqref{eq: JHFJWx 0} can be proved straightforwardly using~\eqref{fop: GDJSK 00} and the fact that 1 is the neutral element w.r.t.~$\fand$. 
The equality \eqref{eq: JHFJWx 1} and the inequalities \eqref{eq: JHFJWx 10}  and \eqref{eq: JHFJWx 20} come directly from \cite{FBSML}, while the inequalities \eqref{eq: JHFJWx 30}--\eqref{eq: JHFJWx 60} can be proved analogously to \cite[inequalities~(10) and (16)--(18)]{FBSML}, respectively. 
\revisedA{The inequality~\eqref{eq: JHFJWx 65} can easily be proved using~\eqref{fop: GDJSK 00} and the associativity and monotonicity of~$\fand$. The inequality~\eqref{eq: JHFJWx 66} directly follows from~\eqref{eq: JHFJWx 65}.} 

Let $X$ be a non-empty set. Any mapping $f: X \to L$ is called a {\em fuzzy subset} of $X$, or simply a {\em fuzzy set}. It is {\em empty} if $f(x) = 0$ for all $x \in X$. 
For elements $x_1,\ldots,x_n \in X$ and values $a_1,\ldots,a_n \in L$, we use the notation $\{x_1:a_1$, \ldots, $x_n:a_n\}$ to represent the fuzzy subset $f$ of $X$ defined by $f(x_i) = a_i$ for each $1 \leq i \leq n$, and $f(x) = 0$ for all $x \in X \setminus \{x_1,\ldots,x_n\}$. 
Given fuzzy subsets $f$ and $g$ of $X$, we write $f \leq g$ if $f(x) \leq g(x)$ for all $x \in X$. 

A fuzzy subset of $X \times Y$ is called a {\em fuzzy relation} between $X$ and $Y$. A fuzzy relation between $X$ and itself is called a fuzzy relation on~$X$. The {\em converse} of a fuzzy relation $R: X \times Y \to L$ is \mbox{$R^- : Y \times X \to L$} defined by $R^-(y,x) = R(x,y)$, for every $\tuple{y,x} \in Y \times X$. The {\em composition} of fuzzy relations \mbox{$R: X \times Y \to L$} and \mbox{$S: Y \times Z \to L$} is the fuzzy relation $R \circ S : X \times Z \to L$ such that, for every $x \in X$ and $z \in Z$, 
\[ (R \circ S)(x,z) = \bigvee_{y \in Y} (R(x,y) \fand S(y,z)). \] 

Given a set $\mR$ of fuzzy relations between $X$ and $Y$, by $\bigvee\!\mR$ we denote the fuzzy relation between $X$ and $Y$ such that, for every $x \in X$ and $y \in Y$, 
\[ (\bigvee\!\mR)(x,y) = \bigvee_{R \in \mR}\! R(x,y). \] 

A fuzzy relation $R$ on $X$ is 
\begin{itemize}
\item {\em reflexive} if $R(x,x) = 1$ for all $x \in X$, 
\item {\em transitive} if $R(x,y) \fand R(y,z) \leq R(x,z)$ for all $x,y,z \in X$. 
\end{itemize}
It is a {\em fuzzy pre-order} if it is reflexive and transitive. 

\subsection{Fuzzy modal logics}

Let $\SA$ denote a non-empty set of {\em actions}, which are also called {\em atomic programs}, 
and let $\SP$ denote a non-empty set of {\em propositions}, which are also called {\em atomic formulas}. 
By \fPDL we denote a fuzzy extension of propositional dynamic logic. {\em Programs} and {\em formulas} of \fPDL over $\mL$ and a signature $\tuple{\SA,\SP}$ are defined inductively as follows:
\begin{itemize}
\item all $\varrho \in \SA$ are programs of \fPDL, 
\item if $\alpha$ and $\beta$ are programs of \fPDL and $\varphi$ is a formula of \fPDL, then $\alpha \circ \beta$, $\alpha \cup \beta$, $\alpha^*$, and $\varphi?$ are programs of \fPDL, 
\item all $a \in L$ and $p \in \SP$ are formulas of \fPDL, 
\item if $\varphi$ and $\psi$ are formulas of \fPDL, $\alpha$ is a program of \fPDL, and $a \in L$, then \revisedA{$a \fand \varphi$,} $\varphi \land \psi$, $\varphi \lor \psi$, $\varphi \to \psi$, $[\alpha]\varphi$, and $\tuple{\alpha}\varphi$ are formulas of \fPDL.\footnote{One can define the residual negation by \mbox{$\lnot\varphi \defeq \varphi \to 0$}.}
\end{itemize} 

We use symbols such as $\varrho$, $\alpha$, $\beta$, $\varphi$, $\psi$, $a$, $p$ with the meanings as in the above definition.

By \fK we denote the largest fragment of \fPDL that disallows program constructors. Thus, the only programs in \fK are actions from $\SA$. 
\revisedA{By \fPDLx and \fKx we denote the restrictions of \fPDL and \fK, respectively, that disallow the formula constructor \mbox{$a \fand \varphi$}.}

A {\em positive formula} (respectively, {\em program}) of \fPDL is a formula (respectively, program) of \fPDL that does not use the test constructor~$\varphi?$ and in which implications \mbox{$\varphi \to \psi$} are restricted to the form $a \to \psi$, where $a \in L$. Similarly, a {\em positive formula} of \fK is a formula of \fK in which implications \mbox{$\varphi \to \psi$} are restricted to the form $a \to \psi$, where $a \in L$. \revisedA{The notions of a positive formula of \fPDLx or \fKx are defined analogously.}

For $\Phi = \{\varphi_1,\ldots,\varphi_n\}$, we denote 
\begin{eqnarray*}
\textstyle\bigvee\!\Phi & = & \varphi_1 \lor\ldots\lor \varphi_n \lor 0 \\
\textstyle\bigwedge\!\Phi & = & \varphi_1 \land\ldots\land \varphi_n \land 1 \\
\textstyle\bigotimes\!\Phi & = & \varphi_1 \fand\ldots\fand \varphi_n \fand 1.
\end{eqnarray*}

\begin{table}\centering
\fbox{\parbox{0.7\linewidth}{\centering
\(
\begin{array}{rcl}
(\alpha \circ \beta)^\mM(x,y) & \!=\! & \bigvee\{\alpha^\mM(x,z) \fand \beta^\mM(z,y) \mid z \in \Delta^\mM \} \\
(\alpha \cup \beta)^\mM(x,y) & \!=\! & \alpha^\mM(x,y) \lor \beta^\mM(x,y) \\
(\alpha^*)^\mM(x,y) & \!=\! & \bigvee \{\textstyle\bigotimes\{\alpha^\mM(x_i,x_{i+1}) \mid 0 \leq i < n\} \mid \\
	&& \qquad n \geq 0,\ x_0,\ldots,x_n \in \Delta^\mM, x_0 = x,\ x_n = y\} \\
(\varphi?)^\mM(x,y) & \!=\! & \textrm{($\varphi^\mM(x)$ if $x = y$ else 0)} \\[0.5ex]
a^\mM(x) & \!=\! & a \\
\revisedA{(a \fand \varphi)^\mM(x)} & \!\revisedA{=}\! & \revisedA{a \fand \varphi^\mM(x)} \\
(\varphi \land \psi)^\mM(x) & \!=\! & \varphi^\mM(x) \land \psi^\mM(x) \\
(\varphi \lor \psi)^\mM(x) & \!=\! & \varphi^\mM(x) \lor \psi^\mM(x) \\
(\varphi \to \psi)^\mM(x) & \!=\! & (\varphi^\mM(x) \fto \psi^\mM(x)) \\
([\alpha]\varphi)^\mM(x) & \!=\! & \bigwedge \{\alpha^\mM(x,y) \fto \varphi^\mM(y) \mid y \in \Delta^\mM\} \\
(\tuple{\alpha}\varphi)^\mM(x) & \!=\! & \bigvee \{\alpha^\mM(x,y) \fand \varphi^\mM(y) \mid y \in \Delta^\mM\}.
\end{array}
\)
}}
\caption{The interpretation of complex programs and formulas\label{table: HFHAD}.}
\end{table}

A {\em fuzzy Kripke model} over $\mL$ and $\tuple{\SA,\SP}$ is a pair $\mM = \tuple{\Delta^\mM, \cdot^\mM}$, where $\DeltaM$ is a non-empty set, called the {\em domain}, and $\cdot^\mM$ is the {\em interpretation function} that maps each $\varrho \in \SA$ to a fuzzy relation $\varrho^\mM$ on $\DeltaM$ and maps each $p \in \SP$ to a fuzzy subset $p^\mM$ of $\DeltaM$.
The interpretation function is extended to complex programs and formulas as shown in Table~\ref{table: HFHAD}.

A fuzzy Kripke model $\mM$ is {\em image-finite} if the set $\{y \in \Delta^\mM \mid \varrho^\mM(x,y) > 0\}$ is finite, for every $x \in \Delta^\mM$ and $\varrho \in \SA$. It is {\em finite} if $\Delta^\mM$ \revisedA{and the signature $\tuple{\SA,\SP}$ are} finite. 
$\mM$ is {\em witnessed} w.r.t.\ \fPDL if every infinite set under the infimum (respectively, supremum) operator in Table~\ref{table: HFHAD} has a smallest (respectively, greatest) element, when considering positive formulas and programs of \fPDL. The notion of being witnessed w.r.t.\ \fK is defined analogously.  
Observe that every finite fuzzy Kripke model is witnessed w.r.t.\ \fPDL, and every image-finite fuzzy Kripke model is witnessed w.r.t.~\fK. If $L$ is finite, then all fuzzy Kripke models are witnessed w.r.t.\ both \fPDL and~\fK. 


\subsection{Fuzzy minimax nets}
\label{sec: minimax}

A {\em (fuzzy) minimax net}~\cite{DBLP:journals/tfs/NguyenMS23} over $\mL$ is a structure $\mN = \tuple{\Vmin, \Vmax, E, \Mz}$, where $\Vmin$ and $\Vmax$ are non-empty disjoint sets of {\em nodes}, consisting of {\em min-nodes} and {\em max-nodes}, respectively, $E: (\Vmin \times \Vmax) \cup (\Vmax \times \Vmin) \to L$ is a fuzzy set of {\em edges}, and $\Mz: \Vmin \to L$ is the {\em marking limit for min-nodes}. 
%
%
A {\em positive} edge of $\mN$ is a pair $\tuple{x,y} \in (\Vmin \times \Vmax) \cup (\Vmax \times \Vmin)$ with $E(x,y) > 0$. 
By $|E|$ we denote the {\em size} of $E$, which is defined to be the number of positive edges of~$\mN$. 

Given a minimax net $\mN = \tuple{\Vmin, \Vmax, E, \Mz}$, a fuzzy set $M: \Vmin \cup \Vmax \to L$ is called a {\em marking} of $\mN$. It is a {\em correct marking} of $\mN$ if, for every $x \in \Vmin$ and $y \in \Vmax$, 
\begin{eqnarray}
	M(x) & \leq & \Mz(x) \label{eq: JHDSD 1} \\
	M(x) & \leq & \bigwedge_{z \in \Vmax}\!\! (E(z,x) \fto M(z)) \label{eq: JHDSD 2} \\
	M(y) & \leq & \bigvee_{z \in \Vmin}\!\!(E(z,y) \fand M(z)). \label{eq: JHDSD 3} 
\end{eqnarray}
A marking $M$ of $\mN$ is {\em stable} if, for every $x \in \Vmin$ and $y \in \Vmax$, 
\begin{eqnarray}
	M(x) & = & \Mz(x) \land \bigwedge_{z \in \Vmax}\!\!(E(z,x) \fto M(z)) \label{eq: JHDSD 4} \\
	M(y) & = & \bigvee_{z \in \Vmin}\!\!(E(z,y) \fand M(z)). \label{eq: JHDSD 5} 
\end{eqnarray}

Note that, by definition, stable markings are correct. 
It has been proved in~\cite{DBLP:journals/tfs/NguyenMS23} that every minimax net has the greatest correct marking and that the greatest correct marking of a minimax net is stable.


\section{Fuzzy directed simulations between fuzzy Kripke models}
\label{section: fbir}

In this section, we introduce fuzzy directed simulations between fuzzy Kripke models and present their properties. In particular, we provide a theorem on preservation of positive formulas of \fPDL under fuzzy directed simulations, as well as a theorem establishing the Hennessy--Milner property for fuzzy directed simulations. The proofs of the results in this section are provided in~\ref{section: appendix}.

Given fuzzy Kripke models $\mM$ and $\mMp$, a fuzzy relation \mbox{$Z : \Delta^\mM \times \Delta^\mMp \to L$} is called a {\em fuzzy directed simulation} between $\mM$ and $\mMp$ if the following conditions hold for all $p \in \SP$, $\varrho \in \SA$ and all possible values for the free variables:
\begin{eqnarray}
&& Z(x,x') \leq (p^\mM(x) \fto p^\mMp(x')) \label{eq: FDS1} \\
&& \E y' \in \Delta^\mMp\ (Z(x,x') \fand \varrho^\mM(x,y) \leq \varrho^\mMp(x',y') \fand Z(y,y')) \label{eq: FDS2} \\
&& \E y \in \Delta^\mM\ (Z(x,x') \fand \varrho^\mMp(x',y') \leq \varrho^\mM(x,y) \fand Z(y,y')). \label{eq: FDS3}
\end{eqnarray}

The above notion differs from the notion of a fuzzy bisimulation~\cite{FBSML} in that~\eqref{eq: FDS1} uses~$\fto$ instead of~$\fequiv$. 
It differs from the notion of a fuzzy simulation~\cite{DBLP:journals/cas/NguyenN22} in that the condition~\eqref{eq: FDS3} is included. 

A fuzzy directed simulation between $\mM$ and itself is called a {\em fuzzy directed auto-simulation} of~$\mM$.

\begin{figure}[t]
\begin{center}
\begin{tabular}{|c|}
\hline
\begin{tikzpicture}[->,>=stealth,auto,black]
	\node (M) {$\mM$};
	\node (Mp) [node distance=6cm, right of=M] {$\mMp$};
	\node (u) [node distance=1cm, below of=M] {$u\!:\!p_{\,1.0}$};
	\node (up) [node distance=1cm, below of=Mp] {$u'\!:\!p_{\,1.0}$};
	\node (bu) [node distance=2.5cm, below of=u] {};
	\node (vp)  [node distance=2.5cm, below of=up] {$v'\!:\!q_{\,0.5}$};
	\node (v) [node distance=1.5cm, left of=bu] {$v\!:\!q_{\,0.7}$};
	\node (w) [node distance=1.5cm, right of=bu] {$w\!:\!q_{\,0.4}$};
	\draw (u) to node [left]{\footnotesize{0.7}} (v);	
	\draw (u) to node [right]{\footnotesize{0.9}} (w);	
    \draw (v) edge[bend left=15] node[above]{\footnotesize{0.6}} (w);
    \draw (w) edge[bend left=15] node[below]{\footnotesize{0.8}} (v);
	\draw (up) to node [right]{\footnotesize{0.8}} (vp);	
    \draw (vp) edge[out=-125,in=-55,looseness=7] node[below]{\footnotesize{0.7}} (vp);
\end{tikzpicture}
\\
\hline
\end{tabular}
\caption{Illustration of the fuzzy Kripke models used in Example~\ref{example: JHRJA}.\label{fig: HFJSK}}
\end{center}
\end{figure}

\begin{example}\label{example: JHRJA}
Let $\SP = \{p,q\}$ and $\SA = \{\sigma\}$. Let $\mM$ and $\mMp$ be the fuzzy Kripke models illustrated in Figure~\ref{fig: HFJSK} and specified as follows:
\begin{itemize}
\item $\DeltaM = \{u,v,w\}$,\ \ $\DeltaMp = \{u',v'\}$, 
\item $p^\mM = \{u\!:\!1\}$, $q^\mM = \{v\!:\!0.7, w\!:\!0.4\}$,\ \ $p^\mMp = \{u'\!:\!1\}$, $q^\mMp = \{v'\!:\!0.5\}$, 
\item $\sigma^\mM = \{\tuple{u,v}\!:\!0.7, \tuple{u,w}\!:\!0.9, \tuple{v,w}\!:\!0.6, \tuple{w,v}\!:\!0.8\}$,\ \ 
      $\sigma^\mMp = \{\tuple{u',v'}\!:\!0.8, \tuple{v',v'}\!:\!0.7\}$.
\end{itemize}

We now determine the greatest fuzzy directed simulation $Z$ between $\mM$ and $\mMp$ when the underlying residuated lattice is the G\"odel, product, or {\L}ukasiewicz structure. Due to~\eqref{eq: FDS1} (for $p,q \in \SP$), we have $Z(u,v') = 0$. Moreover, when the underlying residuated lattice is the G\"odel or product structure, we also have $Z(v,u') = Z(w,u') = 0$.

\begin{itemize}
\item \textbf{Case of the G\"odel structure:} Since $q^\mM(v) = 0.7$ and $q^\mMp(v') = 0.5$, by~\eqref{eq: FDS1} with $q$ in place of $p$, we have $Z(v,v') \leq (0.7 \fto 0.5) = 0.5$. By~\eqref{eq: FDS2}, we have
\begin{eqnarray*}
&& Z(w, v') \leq (\varrho^\mM(w,v) \fto \varrho^\mMp(v',v') \fand Z(v,v')) \leq (0.8 \fto 0.7 \fand 0.5) = 0.5, \\[0.5ex]
&& Z(u, u') \leq (\varrho^\mM(u,v) \fto \varrho^\mMp(u',v') \fand Z(v,v')) \leq (0.7 \fto 0.8 \fand 0.5) = 0.5.
\end{eqnarray*}
It is straightforward to verify that
$Z = \{\tuple{u,u'}\!:\!0.5, \tuple{v,v'}\!:\!0.5, \tuple{w,v'}\!:\!0.5\}$
is a fuzzy directed simulation between $\mM$ and $\mMp$, and hence the greatest one.

\item \textbf{Case of the product structure:} Let $a = Z(v,v')$. By~\eqref{eq: FDS2} and~\eqref{eq: FDS3}, we have
\begin{eqnarray*}
&& Z(w, v') \leq (\varrho^\mM(w,v) \fto \varrho^\mMp(v',v') \fand Z(v,v')) \leq (0.8 \fto 0.7 \fand a) = \tfrac{7a}{8}, \\[0.5ex]
&& Z(v,v') \leq (\varrho^\mMp(v',v') \fto \varrho^\mM(v,w) \fand Z(w,v')) \leq (0.7 \fto 0.6 \fand \tfrac{7a}{8}) = \tfrac{3a}{4}.
\end{eqnarray*}
Thus, $a \leq \tfrac{3a}{4}$, which implies $a = 0$, and hence $Z(v,v') = 0$. Consequently, by~\eqref{eq: FDS2}, it follows that $Z(w, v') \fand \varrho^\mM(w,v) = 0$ and $Z(u, u') \fand \varrho^\mM(u,v) = 0$, and therefore $Z(w, v') = Z(u,u') = 0$. That is, $Z$ is the empty fuzzy relation.

\item \textbf{Case of the {\L}ukasiewicz structure:} Since $q^\mM(v) = 0.7$ and $q^\mMp(v') = 0.5$, by~\eqref{eq: FDS1} with $q$ in place of $p$, we have $Z(v,v') \leq (0.7 \fto 0.5) = 0.8$. As in the previous case, applying~\eqref{eq: FDS2} and~\eqref{eq: FDS3}, we obtain
\begin{eqnarray*}
&& Z(w, v') \leq (0.8 \fto 0.7 \fand 0.8) = 0.7, \\[0.5ex]
&& Z(v,v') \leq (0.7 \fto 0.6 \fand 0.7) = 0.6, \\[0.5ex]
&& Z(w, v') \leq (0.8 \fto 0.7 \fand 0.6) = 0.5, \\[0.5ex]
&& Z(v,v') \leq (0.7 \fto 0.6 \fand 0.5) = 0.4,
\end{eqnarray*}
\begin{eqnarray*}
&& Z(w, v') \leq (0.8 \fto 0.7 \fand 0.4) = 0.3, \\[0.5ex]
&& Z(v,v') \leq (0.7 \fto 0.6 \fand 0.3) = 0.3, \\[0.5ex]
&& Z(w, v') \leq (0.8 \fto 0.7 \fand 0.3) = 0.2.
\end{eqnarray*}
By~\eqref{eq: FDS3}, we further obtain  
\[ Z(v, u') \leq (\varrho^\mMp(u',v') \fto \varrho^\mM(v,w) \fand Z(w, v')) \leq (0.8 \fto 0.6 \fand 0.2) = 0.2. \]
By~\eqref{eq: FDS2}, we have 
\begin{eqnarray*}
&& Z(w, u') \leq (\varrho^\mM(w,v) \fto \varrho^\mMp(u',v') \fand Z(v, v')) \leq (0.8 \fto 0.8 \fand 0.3) = 0.3, \\
&& Z(u, u') \leq (\varrho^\mM(u,w) \fto \varrho^\mMp(u',v') \fand Z(w, v')) \leq (0.9 \fto 0.8 \fand 0.2) = 0.1.
\end{eqnarray*}
Thus, 
\[ Z \leq \{\tuple{u, u'}\!:\!0.1, \tuple{v, u'}\!:\!0.2, \tuple{v, v'}\!:\!0.3, \tuple{w, u'}\!:\!0.3, \tuple{w, v'}\!:\!0.2\}. \] 
It is straightforward to verify that the fuzzy relation on the right-hand side is a fuzzy directed simulation between $\mM$ and $\mMp$, and therefore it is the greatest one (i.e., it coincides with $Z$). 
\end{itemize}

As an example illustrating that directed simulation differs from simulation and bisimulation, it can be verified (e.g., by using our codes~\cite{FDSML-prog} together with the input files ``g5.in'' and ``g5p.in'') that the greatest fuzzy simulation and the greatest fuzzy bisimulation between $\mM$ and $\mM'$ are as follows, respectively:
\begin{itemize}
\item Case of the G\"odel structure:
\begin{eqnarray*}
& \{\tuple{u,u'}\!:\!0.5, \tuple{v,v'}\!:\!0.5, \tuple{w,v'}\!:\!0.5\},\\
& \{\tuple{u,u'}\!:\!0.4, \tuple{v,v'}\!:\!0.4, \tuple{w,v'}\!:\!0.4\};
\end{eqnarray*}

\item Case of the product structure:
\begin{eqnarray*}
& \{\tuple{u, u'}\!:\!\tfrac{5}{9}, \tuple{v, v'}\!:\!\tfrac{5}{7}, \tuple{w, v'}\!:\!0.625\}, \\
& \emptyset \textrm{ (the empty fuzzy relation)};
\end{eqnarray*}

\item Case of the \L{}ukasiewicz structure:
\begin{eqnarray*}
& \{\tuple{u, u'}\!:\!0.6, \tuple{v, u'}\!:\!0.3, \tuple{v, v'}\!:\!0.8, \tuple{w, u'}\!:\!0.6, \tuple{w, v'}\!:\!0.7\},\\
& \{\tuple{u, u'}\!:\!0.1, \tuple{v, v'}\!:\!0.3, \tuple{w, v'}\!:\!0.2\}.
\end{eqnarray*}
\end{itemize}

\vspace{-2em}
\myend
\end{example}

\begin{proposition}\label{prop: HFHSJ}
Let $\mM$, $\mMp$ and $\mM''$ be image-finite fuzzy Kripke models.
\begin{enumerate}
\item\label{ass: HFHSJ 1} The fuzzy relation $Z : \Delta^\mM \times \Delta^\mM \to L$ specified by $Z(x,x')$ = (1 if $x = x'$ else 0) is a fuzzy directed auto-simulation of~$\mM$.
\item\label{ass: HFHSJ 3} If $Z_1$ is a fuzzy directed simulation between $\mM$ and $\mMp$, and $Z_2$ is a fuzzy directed simulation between $\mMp$ and $\mM''$, then $Z_1 \circ Z_2$ is a fuzzy directed simulation between $\mM$ and $\mM''$.
\item\label{ass: HFHSJ 4} If $\mF$ is a family of fuzzy directed simulations between $\mM$ and $\mMp$, then $\bigvee\!\mF$ is also a fuzzy directed simulation between $\mM$ and $\mMp$.
\end{enumerate}   
\end{proposition}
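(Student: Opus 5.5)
We verify the three conditions \eqref{eq: FDS1}--\eqref{eq: FDS3} directly in each case, using linearity and completeness of $\mL$ together with image-finiteness to turn the suprema in compositions and joins into attained maxima.

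For assertion~\ref{ass: HFHSJ 1}, if $x \neq x'$ then $Z(x,x') = 0$ and all three conditions hold trivially, since $0 \fand a = 0$ and $0 \leq (a \fto b)$. If $x = x'$, condition \eqref{eq: FDS1} becomes $1 \leq (p^\mM(x) \fto p^\mM(x))$, which holds by \eqref{fop: GDJSK 00}. For \eqref{eq: FDS2} we choose $y' = y$, so the requirement is $\varrho^\mM(x,y) \leq \varrho^\mM(x,y) \fand 1$; condition \eqref{eq: FDS3} is symmetric.

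For assertion~\ref{ass: HFHSJ 3}, let $Z = Z_1 \circ Z_2$.
\begin{itemize}
\item \emph{Condition \eqref{eq: FDS1}.} By residuation it suffices to show $Z_1(x,x') \fand Z_2(x',x'') \leq (p^\mM(x) \fto p^{\mM''}(x''))$ for every $x'$. This follows from transitivity of $\fto$: $(a \fto b) \fand (b \fto c) \leq (a \fto c)$, which is obtained from \eqref{fop: GDJSK 00} and monotonicity of $\fand$.
\item \emph{Condition \eqref{eq: FDS2}.} Here we need a single $y''$ with $Z(x,x'') \fand \varrho^\mM(x,y) \leq \varrho^{\mM''}(x'',y'') \fand Z(y,y'')$. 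The obstacle is that $Z(x,x'')$ is a supremum over $x' \in \Delta^\mMp$, possibly infinite, while the witness $y''$ depends on $x'$.
\item \emph{Handling the supremum.} If $\varrho^\mM(x,y) = 0$ the claim is trivial, so assume it is positive. For each $x'$, apply \eqref{eq: FDS2} for $Z_1$ to obtain $y'$, then for $Z_2$ (at $x',x''$ and edge $\varrho^\mMp(x',y')$) to obtain $y''$. This yields
\[
Z_1(x,x') \fand Z_2(x',x'') \fand \varrho^\mM(x,y) \leq \varrho^{\mM''}(x'',y'') \fand Z_1(y,y') \fand Z_2(y',y'') \leq \varrho^{\mM''}(x'',y'') \fand Z(y,y'').
\]
Whenever the left-hand side is positive, $\varrho^{\mM''}(x'',y'') > 0$, so $y''$ ranges over the finite set of $\varrho$-successors of $x''$, by image-finiteness of $\mM''$. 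For each such $y''$, let $A_{y''}$ be the supremum of the left-hand sides whose witness is $y''$. By linearity, the finite join $\bigvee_{y''} A_{y''}$ equals some single $A_{y''_0}$.
\item \emph{Pulling out the factor.} It remains to show that $\big(\bigvee_{x'} Z_1(x,x') \fand Z_2(x',x'')\big) \fand \varrho^\mM(x,y)$ equals the supremum of the products. This holds because $\fand$ distributes over arbitrary joins: $c \fand -$ is a left adjoint by \eqref{fop: GDJSK 00}. Hence the whole left-hand side is at most $\varrho^{\mM''}(x'',y''_0) \fand Z(y,y''_0)$.
\item \emph{Condition \eqref{eq: FDS3}.} This case is symmetric. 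Image-finiteness of $\mM$ now bounds the candidate witnesses $y$ and those arising in the intermediate step, and the witnesses are built in the reverse order, first via $Z_2$ and then via $Z_1$.
\end{itemize}

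For assertion~\ref{ass: HFHSJ 4}, condition \eqref{eq: FDS1} holds because a join of elements below a bound is below it. For \eqref{eq: FDS2}, we again distribute $\fand$ over the join: $(\bigvee\mF)(x,x') \fand \varrho^\mM(x,y) = \bigvee_{Z\in\mF} Z(x,x') \fand \varrho^\mM(x,y)$. Each term is bounded by $\varrho^\mMp(x',y'_Z) \fand Z(y,y'_Z) \leq \varrho^\mMp(x',y'_Z) \fand (\bigvee\mF)(y,y'_Z)$. Positive terms have $y'_Z$ among the finitely many $\varrho$-successors of $x'$, and linearity again selects one witness $y'$ attaining the maximum; \eqref{eq: FDS3} is symmetric.

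The main technical point throughout is this witness-selection step. It converts a supremum over infinitely many witnesses into a single existential witness, and it relies on image-finiteness, linearity of $\mL$, and distributivity of $\fand$ over joins.
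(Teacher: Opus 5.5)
Your proposal is correct and follows essentially the same route as the paper: it distributes $\fand$ over the join, takes a witness for each component, and uses image-finiteness together with linearity of $\mL$ to select one common witness. The paper does this explicitly for $\bigvee\mF$, and for the composition it points to the analogous fuzzy-bisimulation argument plus $(a \fto b) \fand (b \fto c) \leq (a \fto c)$, just as you do. Two trivial points to tidy: when $x''$ has no positive $\varrho$-successor, the left-hand side is $0$ and any $y''$ serves as a witness; and in the \eqref{eq: FDS3} case only the final witnesses $y \in \Delta^\mM$ need to be finitely bounded, since the intermediate witnesses live in $\Delta^\mMp$ and need no bound.
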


\newcommand{\ProofPropositionHFHS}{
\begin{proof}[of Proposition~\ref{prop: HFHSJ}]
The first assertion clearly holds. The second assertion can be proved analogously to the third assertion of~\cite[Proposition~3.6]{FBSML}, using the following fact~\cite[Lemma~2.1]{FBSML}: 
\[ (a \fto b) \fand (b \fto c) \leq (a \fto c). \]

Consider the third assertion, with $\mF$ being a set of fuzzy directed simulations between~$\mM$ and~$\mMp$. 
The condition~\eqref{eq: FDS1} with $Z$ replaced by $\bigvee\!\mF$ clearly holds. 

Consider the condition~\eqref{eq: FDS2} with $Z$ replaced by $\bigvee\!\mF$. Let $x, y \in \DeltaM$, $x' \in \DeltaMp$, and $\varrho \in \SA$. It is well known that $\fand$ is left-continuous. Hence, 
\[
   \big(\bigvee\!\mF\big)(x,x') \fand \varrho^\mM(x,y) = \big(\!\bigvee_{Z \in \mF}\! Z(x,x')\big) \fand \varrho^\mM(x,y) = \bigvee_{Z \in \mF} (Z(x,x') \fand \varrho^\mM(x,y)).
\]
For each $Z \in \mF$, since $Z$ is a fuzzy directed simulation between $\mM$ and $\mMp$, there exists $y'_Z \in \DeltaMp$ such that 
\begin{equation*}
Z(x,x') \fand \varrho^\mM(x,y) \leq \varrho^\mMp(x', y'_Z) \fand Z(y, y'_Z).
\end{equation*} 
Since $\mMp$ is image-finite and $\mL$ is linear, it follows that there exists a common $y' \in \DeltaMp$ such that
\begin{equation*}
\bigvee_{Z \in \mF}\!(\varrho^\mMp(x', y'_Z) \fand Z(y, y'_Z)) \leq \bigvee_{Z \in \mF}\!(\varrho^\mMp(x', y') \fand Z(y, y')).
\end{equation*} 
With that $y'$ we have 
\[
         \big(\bigvee\!\mF\big)(x,x') \fand \varrho^\mM(x,y) 
    \leq \bigvee_{Z \in \mF} (\varrho^\mMp(x', y') \fand Z(y, y')) 
    = \varrho^\mMp(x', y') \fand \bigvee_{Z \in \mF}\! Z(y, y').
\]
Therefore, 
\[
         \big(\bigvee\!\mF\big)(x,x') \fand \varrho^\mM(x,y) 
    \leq \varrho^\mMp(x', y') \fand \big(\bigvee\!\mF\big)(y, y'),
\]
which implies that the condition~\eqref{eq: FDS2} with $Z$ replaced by $\bigvee\!\mF$ holds.

The condition~\eqref{eq: FDS3} with $Z$ replaced by $\bigvee\!\mF$ can be proved analogously.
\myend
\end{proof}
} 

The following corollary follows directly from the above proposition.

\begin{corollary}\label{cor: HFHSJ}
Let $\mM$ be an image-finite fuzzy Kripke model. 
Then the greatest fuzzy directed auto-simulation of $\mM$ exists and is a fuzzy pre-order.
\end{corollary}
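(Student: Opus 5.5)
The plan is to derive Corollary~\ref{cor: HFHSJ} directly from Proposition~\ref{prop: HFHSJ}, with $\mMp = \mM'' = \mM$ throughout.

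\textbf{Existence.} Let $\mF$ be the set of all fuzzy directed auto-simulations of $\mM$. It is non-empty, since by assertion~\ref{ass: HFHSJ 1} it contains the identity relation (the empty relation also belongs to it). By assertion~\ref{ass: HFHSJ 4}, the relation $Z_{\max} = \bigvee\!\mF$ is again a fuzzy directed auto-simulation of $\mM$. Since $Z \leq \bigvee\!\mF$ for every $Z \in \mF$, the relation $Z_{\max}$ is the greatest element of $\mF$.

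\textbf{Reflexivity.} The identity relation $I$ from assertion~\ref{ass: HFHSJ 1} belongs to $\mF$. Hence $I \leq Z_{\max}$, so $Z_{\max}(x,x) \geq 1$, that is, $Z_{\max}(x,x) = 1$ for all $x \in \DeltaM$.

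\textbf{Transitivity.} By assertion~\ref{ass: HFHSJ 3} with $Z_1 = Z_2 = Z_{\max}$, the composition $Z_{\max} \circ Z_{\max}$ is a fuzzy directed auto-simulation of $\mM$. It therefore belongs to $\mF$, which gives $Z_{\max} \circ Z_{\max} \leq Z_{\max}$. Unfolding the definition of composition, for all $x,y,z \in \DeltaM$,
\[ Z_{\max}(x,y) \fand Z_{\max}(y,z) \leq \bigvee_{u \in \DeltaM} (Z_{\max}(x,u) \fand Z_{\max}(u,z)) = (Z_{\max} \circ Z_{\max})(x,z) \leq Z_{\max}(x,z), \]
which is exactly transitivity.

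Together, reflexivity and transitivity show that $Z_{\max}$ is a fuzzy pre-order. There is no real obstacle: every step is a direct invocation of the proposition. The only point needing attention is that image-finiteness of $\mM$ is a hypothesis of the proposition, and it holds here because all three models in it are taken to be $\mM$.
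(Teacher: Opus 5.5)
Your proposal is correct and is exactly the argument the paper has in mind when it says the corollary follows directly from Proposition~\ref{prop: HFHSJ}. You get the greatest auto-simulation as the join of all auto-simulations by assertion~\ref{ass: HFHSJ 4}, reflexivity from assertion~\ref{ass: HFHSJ 1}, and transitivity from assertion~\ref{ass: HFHSJ 3} via $Z_{\max} \circ Z_{\max} \leq Z_{\max}$, which you correctly unfold into the pointwise inequality.
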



We say that a formula $\varphi$ is {\em preserved} under fuzzy directed simulations if, 
for every pair of fuzzy Kripke models $\mM$ and $\mM'$ that are witnessed w.r.t.\ \fPDL and for every fuzzy directed simulation $Z$ between $\mM$ and $ \mM'$, \mbox{$Z(x,x') \leq (\varphi^\mM(x) \fto \varphi^\mMp(x'))$} for all $x \in \Delta^\mM$ and $x' \in \Delta^\mMp$. 
 
\begin{theorem} \label{theorem: preservation}
All positive formulas of \fPDL are preserved under fuzzy directed simulations.
\end{theorem}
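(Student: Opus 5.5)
We argue by simultaneous structural induction on positive formulas and positive programs of \fPDL. For a formula $\varphi$ the claim is $Z(x,x') \fand \varphi^\mM(x) \leq \varphi^\mMp(x')$, which is equivalent to $Z(x,x') \leq (\varphi^\mM(x) \fto \varphi^\mMp(x'))$ by \eqref{fop: GDJSK 00}. For a positive program $\alpha$ we prove two companion properties that generalize \eqref{eq: FDS2} and \eqref{eq: FDS3}. Forth: for all $x,y,x'$ there is $y'$ with $Z(x,x') \fand \alpha^\mM(x,y) \leq \alpha^\mMp(x',y') \fand Z(y,y')$. Back: for all $x,x',y'$ there is $y$ with $Z(x,x') \fand \alpha^\mMp(x',y') \leq \alpha^\mM(x,y) \fand Z(y,y')$. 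Positive programs contain no tests, so this induction on programs does not depend on formulas, and we may carry it out first.

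\textbf{Programs.} Atomic programs are exactly \eqref{eq: FDS2} and \eqref{eq: FDS3}.
\begin{itemize}
\item $\alpha\cup\beta$: by linearity, $\alpha^\mM(x,y)\lor\beta^\mM(x,y)$ equals one of the two values, and we apply the hypothesis for that component.
\item $\alpha\circ\beta$: the model is witnessed, so the supremum over $z$ is attained at some $z$. We chain forth for $\alpha$ (from $x,x'$ to $z,z'$) with forth for $\beta$ (from $z,z'$ to $y,y'$), using associativity and monotonicity of $\fand$ and $\alpha^\mMp(x',z')\fand\beta^\mMp(z',y') \leq (\alpha\circ\beta)^\mMp(x',y')$.
\item $\alpha^*$: witnessedness gives a path attaining the supremum. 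We iterate the $\alpha$ forth property along this path; the empty path is trivial since $Z(x,x)$ pairs with $y'=x'$.
\end{itemize}
Back is symmetric in each case.

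\textbf{Formulas.}
\begin{itemize}
\item Constants: $Z\fand a\le a$.
\item Atoms: this is \eqref{eq: FDS1}.
\item $a\fand\varphi$: commutativity and monotonicity of $\fand$.
\item $\land$ and $\lor$: by monotonicity (equivalently, \eqref{eq: JHFJWx 40} and \eqref{eq: JHFJWx 50}).
\item $a\to\psi$: from $Z\fand(a\fto\psi^\mM(x))\le (a\fto Z\fand\psi^\mM(x))$, which is \eqref{eq: JHFJWx 30}, followed by the induction hypothesis and monotonicity of $a\fto\cdot$.
\item $\tuple{\alpha}\varphi$: take $y$ attaining the supremum (witnessedness). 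Then forth gives $Z(x,x')\fand\alpha^\mM(x,y)\fand\varphi^\mM(y)\le \alpha^\mMp(x',y')\fand Z(y,y')\fand\varphi^\mM(y)\le\alpha^\mMp(x',y')\fand\varphi^\mMp(y')$.
\end{itemize}

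\textbf{The box case} $[\alpha]\varphi$ is the main obstacle and is where back is used. We must show $Z(x,x')\fand([\alpha]\varphi)^\mM(x)\le (\alpha^\mMp(x',y')\fto\varphi^\mMp(y'))$ for every $y'$, i.e.
\[
\alpha^\mMp(x',y')\fand Z(x,x')\fand([\alpha]\varphi)^\mM(x)\le\varphi^\mMp(y').
\]
By back, choose $y$ with $Z(x,x')\fand\alpha^\mMp(x',y')\le\alpha^\mM(x,y)\fand Z(y,y')$. Then the left side is at most
\[
\alpha^\mM(x,y)\fand(\alpha^\mM(x,y)\fto\varphi^\mM(y))\fand Z(y,y')\le \varphi^\mM(y)\fand Z(y,y')\le\varphi^\mMp(y'),
\]
using modus ponens $a\fand(a\fto b)\le b$ and the induction hypothesis. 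No witnessing is needed here, since we bound the infimum by a single term. The care required throughout is to keep $Z$ on the left of $\fand$-products and to apply witnessedness only to positive formulas and programs, which is exactly what the definition provides.
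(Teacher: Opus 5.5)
Your proof is correct and follows the same route as the paper's proof via Lemma~\ref{lemma: GDHAW}: a structural induction that carries forth and back properties for positive programs (witnessedness for $\circ$ and ${}^*$, linearity for $\cup$), using forth for $\tuple{\alpha}\varphi$ and back for $[\alpha]\varphi$. The differences are minor: you use the equivalent product form $Z(x,x') \fand \varphi^\mM(x) \leq \varphi^\mMp(x')$ rather than the residuum form, and in the box case you bound each term of the infimum over $y'$ instead of using witnessedness of $\mMp$ to pick a minimizing $y'$ as the paper does, which shows that this use of witnessedness is unnecessary.
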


This theorem is an immediate consequence of the following lemma.

\begin{lemma} \label{lemma: GDHAW}
Let $\mM$ and $\mM'$ be fuzzy Kripke models that are witnessed w.r.t.\ \fPDL and let $Z$ be a fuzzy directed simulation between $\mM$ and $\mM'$. Then the following properties hold for every positive formula $\varphi$ of \fPDL, every positive program $\alpha$ of \fPDL, and all values of the free variables:
\begin{eqnarray}
&& Z(x,x') \leq (\varphi^\mM(x) \fto \varphi^\mMp(x')) \label{eq: GDHAW 1} \\[0.5ex]
&& \E y' \in \Delta^\mMp\ (Z(x,x') \fand \alpha^\mM(x,y) \leq \alpha^\mMp(x',y') \fand Z(y,y')) \label{eq: GDHAW 2} \\
&& \E y \in \Delta^\mM\ (Z(x,x') \fand \alpha^\mMp(x',y') \leq \alpha^\mM(x,y) \fand Z(y,y')). \label{eq: GDHAW 3} 
\end{eqnarray}
\end{lemma}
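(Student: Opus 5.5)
We prove the three properties by simultaneous induction on the structure of positive formulas and positive programs. Since positive programs contain no tests, \eqref{eq: GDHAW 2} and \eqref{eq: GDHAW 3} for a program $\alpha$ depend only on its program subterms. So we can first prove \eqref{eq: GDHAW 2}--\eqref{eq: GDHAW 3} for all positive programs by induction on $\alpha$. We then prove \eqref{eq: GDHAW 1} by induction on $\varphi$, using the program result for the modal cases. Throughout, we use that $\mL$ is linear and complete, the residuation law \eqref{fop: GDJSK 00}, and Lemma~\ref{lemma: JHFJWx}.

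\textbf{Programs.} For $\alpha = \varrho$, the claims are exactly \eqref{eq: FDS2} and \eqref{eq: FDS3}.

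For $\alpha \cup \beta$, linearity gives $(\alpha\cup\beta)^\mM(x,y)$ equal to $\alpha^\mM(x,y)$ or $\beta^\mM(x,y)$. We pick the witness $y'$ given by the induction hypothesis for that component and use monotonicity of $\fand$.

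For $\alpha\circ\beta$, witnessedness gives a $z$ attaining the supremum, so $Z(x,x')\fand(\alpha\circ\beta)^\mM(x,y) = Z(x,x')\fand\alpha^\mM(x,z)\fand\beta^\mM(z,y)$. The hypothesis for $\alpha$ yields $z'$ with this bounded by $\alpha^\mMp(x',z')\fand Z(z,z')\fand\beta^\mM(z,y)$. The hypothesis for $\beta$ applied at $(z,z')$ then yields $y'$ with a bound of $\alpha^\mMp(x',z')\fand\beta^\mMp(z',y')\fand Z(y,y') \le (\alpha\circ\beta)^\mMp(x',y')\fand Z(y,y')$.

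For $\alpha^*$, witnessedness gives a finite path $x_0,\dots,x_n$ attaining the supremum. We iterate the single-step argument along the path by an inner induction on $n$, building $x'_0=x',\dots,x'_n$. The case $n=0$ uses $y'=x'$ and $Z(x,x')\fand 1 = 1\fand Z(x,x)$.

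The converse property \eqref{eq: GDHAW 3} is symmetric in each case.

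\textbf{Formulas.}
\begin{itemize}
\item For constants $a$, we have $Z\le 1 = (a\fto a)$.
\item For atoms $p$, this is \eqref{eq: FDS1}.
\item For $a\fand\varphi$, use \eqref{eq: JHFJWx 65} together with the induction hypothesis.
\item For $\land$ and $\lor$, use \eqref{eq: JHFJWx 40} and \eqref{eq: JHFJWx 50}.
\item For $a\to\psi$, use \eqref{eq: JHFJWx 60} with $c=a$.
\end{itemize}

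The case $\tuple{\alpha}\varphi$ goes as follows. Take $y$ attaining the supremum, which exists by witnessedness. By residuation it suffices to show $Z(x,x')\fand\alpha^\mM(x,y)\fand\varphi^\mM(y)\le(\tuple{\alpha}\varphi)^\mMp(x')$. By \eqref{eq: GDHAW 2} the left side is at most $\alpha^\mMp(x',y')\fand Z(y,y')\fand\varphi^\mM(y)$. Then $Z(y,y')\fand\varphi^\mM(y)\le\varphi^\mMp(y')$ by the induction hypothesis and residuation, which closes this case.

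\textbf{The box case (main obstacle).} For $[\alpha]\varphi$ we must show, for each $y'$, that
\[Z(x,x')\fand([\alpha]\varphi)^\mM(x)\fand\alpha^\mMp(x',y')\le\varphi^\mMp(y').\]
By \eqref{eq: GDHAW 3} choose $y$ with $Z(x,x')\fand\alpha^\mMp(x',y')\le\alpha^\mM(x,y)\fand Z(y,y')$. Then we bound
\[([\alpha]\varphi)^\mM(x)\fand\alpha^\mM(x,y)\fand Z(y,y') \le (\alpha^\mM(x,y)\fto\varphi^\mM(y))\fand\alpha^\mM(x,y)\fand Z(y,y') \le \varphi^\mM(y)\fand Z(y,y') \le \varphi^\mMp(y'),\]
using the induction hypothesis in the last step. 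This is exactly where \eqref{eq: FDS3}, the back condition distinguishing directed simulations from simulations, is needed.

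Taking the infimum over $y'$ finishes the box case. Witnessedness is used only for attaining suprema in the $\tuple{\alpha}$, $\circ$ and $^*$ cases.
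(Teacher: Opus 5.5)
Your proof is correct and follows essentially the same structural induction as the paper, with the same case analysis. As in the paper, the key step is invoking \eqref{eq: GDHAW 2} in the $\tuple{\alpha}$ case and the back condition \eqref{eq: GDHAW 3} in the $[\alpha]$ case. The one real deviation is in the box case: you prove the bound separately for every $y'$ and then take the infimum, whereas the paper uses witnessedness of $\mMp$ to pick a $y'$ realizing $([\alpha]\varphi)^{\mMp}(x')$. Your version is slightly cleaner, since that step needs no extra witnessedness and only modus ponens instead of \eqref{eq: JHFJWx 20} and \eqref{eq: JHFJWx 1}. One small typo: in your $n=0$ star case, $Z(x,x)$ should read $Z(x,x')$.
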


\newcommand{\ProofLemmaGDHAW}{
\begin{proof}[of Lemma~\ref{lemma: GDHAW}]
This proof is adapted from the proof of~\cite[Lemma~4.2]{FBSML} (on invariance of fuzzy bisimulations). We proceed by induction on the structure of $\varphi$ and~$\alpha$.

	Consider the assertion~\eqref{eq: GDHAW 1}. 
	The case where $\varphi = a$ is trivial. 
	The case where $\varphi = p$ follows from the condition~\eqref{eq: FDS1}. 
	\begin{itemize}
		\item {\markRevisedA Case $\varphi = a \fand \psi$: We have $\varphi^\mM(x) = a \fand \psi^\mM(x)$ and \mbox{$\varphi^\mMp(x') = a \fand \psi^\mMp(x')$}. By the induction assumption of~\eqref{eq: GDHAW 1}, 
		\begin{equation}\label{eq: JFHSN 1}
		Z(x,x') \leq \left(\psi^\mM(x) \fto \psi^\mMp(x')\right).
		\end{equation}
		By~\eqref{eq: JHFJWx 65}, 
		\begin{eqnarray}
		&& \left(\psi^\mM(x) \fto \psi^\mMp(x')\right) \leq  \left(a \fand \psi^\mM(x) \fto a \fand \psi^\mMp(x')\right). \label{eq: JFHSN 2}
		\end{eqnarray}
		The assertion~\eqref{eq: GDHAW 1} follows from~\eqref{eq: JFHSN 1} and \eqref{eq: JFHSN 2}.
        } 
		
		\item Case $\varphi = \psi \land \xi$: We have $\varphi^\mM(x) = \psi^\mM(x) \land \xi^\mM(x)$ and \mbox{$\varphi^\mMp(x') = \psi^\mMp(x') \land \xi^\mMp(x')$}. By the induction assumption of~\eqref{eq: GDHAW 1}, 
		\begin{eqnarray}
		Z(x,x') & \leq & \left(\psi^\mM(x) \fto \psi^\mMp(x')\right) \label{eq: JDGSH 1} \\
		Z(x,x') & \leq & \left(\xi^\mM(x) \fto \xi^\mMp(x')\right). \label{eq: JDGSH 2}
		\end{eqnarray}
		By~\eqref{eq: JHFJWx 40}, 
		\begin{eqnarray}
		&& \left(\psi^\mM(x) \fto \psi^\mMp(x')\right) \land \left(\xi^\mM(x) \fto \xi^\mMp(x')\right) \leq \left(\varphi^\mM(x) \fto \varphi^\mMp(x')\right). \label{eq: JDGSH 3}
		\end{eqnarray}
		The assertion~\eqref{eq: GDHAW 1} follows from~\eqref{eq: JDGSH 1}, \eqref{eq: JDGSH 2} and~\eqref{eq: JDGSH 3}. 
		
		\item The case $\varphi = (\psi \lor \xi)$ is similar to the previous case, using~\eqref{eq: JHFJWx 50} instead of~\eqref{eq: JHFJWx 40}.  
		
		\item Case $\varphi = (a \to \psi)$: We have $\varphi^\mM(x) = (a \fto \psi^\mM(x))$ and \mbox{$\varphi^\mMp(x') = (a \fto \psi^\mMp(x'))$}. By the induction assumption of~\eqref{eq: GDHAW 1}, 
		\( Z(x,x') \leq (\psi^\mM(x) \fto \psi^\mMp(x')). \)
		The assertion~\eqref{eq: GDHAW 1} follows from this and~\eqref{eq: JHFJWx 60}.  

		\item Case $\varphi = \tuple{\alpha}\psi$: Since $\mM$ is witnessed w.r.t.\ \fPDL, there exists $y \in \Delta^\mM$ such that 
		\begin{equation}
		\varphi^\mM(x) = \alpha^\mM(x,y) \fand \psi^\mM(y). \label{eq: HJKAQ 1}
		\end{equation}
		By the induction assumption of~\revisedA{\eqref{eq: GDHAW 2}}, there exists $y' \in \Delta^\mMp$ such that 
		\begin{equation}
		Z(x,x') \fand \alpha^\mM(x,y) \leq \alpha^\mMp(x',y') \fand Z(y,y'). \label{eq: HJKAQ 2}
		\end{equation}
		By definition, 
		\begin{equation}
		\alpha^\mMp(x',y') \fand \psi^\mMp(y') \leq \varphi^\mMp(x'). \label{eq: HJKAQ 2a}
		\end{equation}
		
		By the induction assumption of~\eqref{eq: GDHAW 1}, 
		\begin{equation}
		Z(y,y') \ \leq\ \left(\psi^\mM(y) \fto \psi^\mMp(y')\right). \label{eq: HJKAQ 3}
		\end{equation}
		
		By \eqref{eq: HJKAQ 2} and \eqref{fop: GDJSK 00},  
		\[ Z(x,x') \ \leq\ \left(\alpha^\mM(x,y) \fto Z(y,y') \fand \alpha^\mMp(x',y')\right). \]
		Since $\fand$ is monotone and $\fto$ is monotone w.r.t.\ the second argument, by \eqref{eq: HJKAQ 3}, it follows that 
		\[ Z(x,x') \ \leq\ \left(\alpha^\mM(x,y) \fto \left(\psi^\mM(y) \fto \psi^\mMp(y')\right) \fand \alpha^\mMp(x',y')\right). \]
		Since $\fand$ is commutative and $\fto$ is monotone w.r.t.\ the second argument, by~\eqref{eq: JHFJWx 30}, it follows that 
		\[ Z(x,x') \ \leq\ \left(\alpha^\mM(x,y) \fto \left(\psi^\mM(y) \fto \alpha^\mMp(x',y') \fand \psi^\mMp(y')\right)\right). \]
		By \eqref{eq: JHFJWx 10}, it follows that 
		\[ Z(x,x') \ \leq\  \left(\alpha^\mM(x,y) \fand \psi^\mM(y) \fto \alpha^\mMp(x',y') \fand \psi^\mMp(y')\right). \]
		Since $\fto$ is monotone w.r.t.\ the second argument, by \eqref{eq: HJKAQ 1} and \eqref{eq: HJKAQ 2a}, it follows that 
		\[ Z(x,x') \ \leq\ \left(\varphi^\mM(x) \fto \varphi^\mMp(x')\right). \]

		\item Case $\varphi = [\alpha]\psi$: Since $\mMp$ is witnessed w.r.t.\ \fPDL, there exists $y' \in \Delta^\mMp$ such that 
		\begin{equation}
		\varphi^\mMp(x') = \left(\alpha^\mMp(x',y') \fto \psi^\mMp(y')\right). \label{eq: KDNSJ 1}
		\end{equation}
		By the induction assumption of~\revisedA{\eqref{eq: GDHAW 3}}, there exists $y \in \Delta^\mM$ such that 
		\begin{equation}
		Z(x,x') \fand \alpha^\mMp(x',y') \leq \alpha^\mM(x,y) \fand Z(y,y'). \label{eq: KDNSJ 2}
		\end{equation}
		By definition, 
		\begin{equation}
		\varphi^\mM(x) \ \leq\ \left(\alpha^\mM(x,y) \fto \psi^\mM(y)\right). \label{eq: KDNSJ 2a}
		\end{equation}
		By the induction assumption of~\eqref{eq: GDHAW 1}, 
		\begin{equation}
		Z(y,y') \ \leq\ \left(\psi^\mM(y) \fto \psi^\mMp(y')\right). \label{eq: KDNSJ 3}
		\end{equation}
		
		By \eqref{eq: KDNSJ 2} and \eqref{fop: GDJSK 00}, 
		\[ Z(x,x') \ \leq\ \left(\alpha^\mMp(x',y') \fto Z(y,y') \fand \alpha^\mM(x,y)\right). \]
		Since $\fand$ is monotone and $\fto$ is monotone w.r.t.\ the second argument, by \eqref{eq: KDNSJ 3}, it follows that 
		\[ Z(x,x') \ \leq\ \left(\alpha^\mMp(x',y') \fto \left(\psi^\mM(y) \fto \psi^\mMp(y')\right) \fand \alpha^\mM(x,y)\right). \]
		Since $\fand$ is commutative and $\fto$ is monotone w.r.t.\ the second argument, by \eqref{eq: JHFJWx 20}, it follows that 
		\[ Z(x,x') \ \leq\ \left(\alpha^\mMp(x',y') \fto \left(\left(\alpha^\mM(x,y) \fto \psi^\mM(y)\right) \fto \psi^\mMp(y')\right)\right). \]
		By \eqref{eq: JHFJWx 1}, it follows that 
		\[ Z(x,x') \ \leq\ \left(\left(\alpha^\mM(x,y) \fto \psi^\mM(y)\right) \fto \left(\alpha^\mMp(x',y') \fto \psi^\mMp(y')\right)\right). \]
		Since $\fto$ is antitone w.r.t.\ the first argument, by \eqref{eq: KDNSJ 2a} and \eqref{eq: KDNSJ 1}, it follows that 
		\[ Z(x,x') \ \leq\ \left(\varphi^\mM(x) \fto \varphi^\mMp(x')\right). \]
	\end{itemize}

	Now, consider the assertion~\eqref{eq: GDHAW 2}. Let $x,y \in \Delta^\mM$ and $x' \in \Delta^\mMp$. It suffices to show that there exists $y' \in \Delta^\mMp$ such that
	\begin{equation}\label{eq: HDHAK}
	Z(x,x') \fand \alpha^\mM(x,y) \leq \alpha^\mMp(x',y') \fand Z(y,y').
	\end{equation}
	The base case occurs when $\alpha$ is an atomic program and follows from~\eqref{eq: FDS2}. The induction steps are given below.
	\begin{itemize}
		\item Case $\alpha = \beta \circ \gamma$: Since $\mM$ is witnessed w.r.t.\ \fPDL, there exists $z \in \Delta^\mM$ such that \mbox{$\alpha^\mM(x,y) = \beta^\mM(x,z) \fand \gamma^\mM(z,y)$}. By the induction assumption of~\eqref{eq: GDHAW 2}, there exist $z'$ and $y'$ such that:
		\begin{eqnarray*}
			Z(x,x') \fand \beta^\mM(x,z) & \leq & \beta^\mMp(x',z') \fand Z(z,z') \\
			Z(z,z') \fand \gamma^\mM(z,y) & \leq & \gamma^\mMp(z',y') \fand Z(y,y').
		\end{eqnarray*}
		Since $\fand$ is associative and monotone, it follows that 
		\begin{eqnarray*}
			Z(x,x') \fand \alpha^\mM(x,y) 
			& = & Z(x,x') \fand \beta^\mM(x,z) \fand \gamma^\mM(z,y) \\
			& \leq & \beta^\mMp(x',z') \fand Z(z,z') \fand \gamma^\mM(z,y) \\
			& \leq & \beta^\mMp(x',z') \fand \gamma^\mMp(z',y') \fand Z(y,y') \\
			& \leq & \alpha^\mMp(x',y') \fand Z(y,y').
		\end{eqnarray*}
		
		\item Case $\alpha = \beta \cup \gamma$: Without loss of generality, suppose $\beta^\mM(x,y) \geq \gamma^\mM(x,y)$. Thus, $\alpha^\mM(x,y) = \beta^\mM(x,y)$. By the induction assumption of~\eqref{eq: GDHAW 2}, there exists $y' \in \Delta^\mMp$ such that 
		\[ Z(x,x') \fand \beta^\mM(x,y) \leq \beta^\mMp(x',y') \fand Z(y,y'). \] 
		Thus, 
		\begin{eqnarray*}
		Z(x,x') \fand \alpha^\mM(x,y) & = & Z(x,x') \fand \beta^\mM(x,y)\\
		& \leq & \beta^\mMp(x',y') \fand Z(y,y')\\
		& \leq & \alpha^\mMp(x',y') \fand Z(y,y').
		\end{eqnarray*}

		\item Case $\alpha = \beta^*$: Since $\mM$ is witnessed w.r.t.\ \fPDL, there exist $x_0, \ldots, x_k \in \Delta^\mM$ such that $x_0 = x$, $x_k = y$, and 
		\[ \alpha^\mM(x,y) = \beta^\mM(x_0,x_1) \fand\cdots\fand \beta^\mM(x_{k-1},x_k). \]
		Let $x'_0 = x'$. By the induction assumption of~\eqref{eq: GDHAW 2}, there exist $x'_1,\ldots,x'_k \in \Delta^\mMp$ such that 
		\[ Z(x_i,x'_i) \fand \beta^\mM(x_i,x_{i+1}) \leq \beta^\mMp(x'_i,x'_{i+1}) \fand Z(x_{i+1},x'_{i+1}) \]
		for all $0 \leq i < k$. Since $\fand$ is associative and monotone, it follows that 
		\begin{eqnarray*}
			\!\!\!\!\!& \!\! & Z(x_0,x'_0) \fand \alpha^\mM(x_0,x_k) \\
			\!\!\!\!\!& \!=\! & Z(x_0,x'_0) \fand \beta^\mM(x_0,x_1) \fand\cdots\fand \beta^\mM(x_{k-1},x_k) \\ 
			\!\!\!\!\!& \!\leq\! & \beta^\mMp(x'_0,x'_1) \fand Z(x_1,x'_1) \fand \beta^\mM(x_1,x_2) \fand \cdots\fand \beta^\mM(x_{k-1},x_k) \\
			\!\!\!\!\!& \!\leq\! & \beta^\mMp(x'_0,x'_1) \fand \beta^\mMp(x'_1,x'_2) \fand Z(x_2,x'_2) \fand \beta^\mM(x_2,x_3) \fand \cdots\fand \beta^\mM(x_{k-1},x_k) \\
			\!\!\!\!\!& \!\leq\! & \ldots \\
			\!\!\!\!\!& \!\leq\! & \beta^\mMp(x'_0,x'_1) \fand\cdots\fand \beta^\mMp(x'_{k-1},x'_k) \fand Z(x_k,x'_k) \\ 
			\!\!\!\!\!& \!\leq\! & \alpha^\mMp(x'_0,x'_k) \fand Z(x_k,x'_k).
		\end{eqnarray*}
		Taking $y' = x'_k$, we obtain~\eqref{eq: HDHAK}.
	\end{itemize}
	
	The assertion~\eqref{eq: GDHAW 3} can be proved analogously to~\eqref{eq: GDHAW 2}.
\myend
\end{proof}
} 

The following lemma is a counterpart of Lemma~\ref{lemma: GDHAW} for \fK. It is needed for the proof of Theorem~\ref{theorem: HM}. Its proof proceeds analogously to that of assertion~\eqref{eq: GDHAW 1} in Lemma~\ref{lemma: GDHAW}, using~\eqref{eq: FDS2} and~\eqref{eq: FDS3} in place of~\eqref{eq: GDHAW 2} and~\eqref{eq: GDHAW 3}, respectively.

\begin{lemma} \label{lemma: GDHAW 2}
Let $\mM$ and $\mM'$ be fuzzy Kripke models that are witnessed w.r.t.\ \fK and let $Z$ be a fuzzy directed simulation between $\mM$ and $ \mM'$. Then, the following property holds for every $x \in \Delta^\mM$, $x' \in \Delta^\mMp$, and every positive formula $\varphi$ of \fK:
\[ Z(x,x') \ \leq\ \left(\varphi^\mM(x) \fto \varphi^\mMp(x')\right). \]
\end{lemma}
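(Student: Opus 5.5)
We proceed by structural induction on the positive formula $\varphi$ of \fK, following exactly the scheme used for assertion~\eqref{eq: GDHAW 1} in the proof of Lemma~\ref{lemma: GDHAW}. The difference is that programs are now atomic actions $\varrho \in \SA$, so no auxiliary induction on programs is needed. The transfer conditions for programs are supplied directly by~\eqref{eq: FDS2} and~\eqref{eq: FDS3} of the definition of a fuzzy directed simulation.

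The base cases are immediate. For $\varphi = a$ we have $Z(x,x') \leq 1 = (a \fto a)$. For $\varphi = p$ the claim is exactly~\eqref{eq: FDS1}. The propositional induction steps are the same as before:
\begin{itemize}
\item for $a \fand \psi$ we combine the induction hypothesis with~\eqref{eq: JHFJWx 65};
\item for $\psi \land \xi$ and $\psi \lor \xi$ we combine the two hypotheses via~\eqref{eq: JHFJWx 40} and~\eqref{eq: JHFJWx 50};
\item for $a \to \psi$ we use~\eqref{eq: JHFJWx 60} with $c = a$.
\end{itemize}
Positivity guarantees that these are the only implication forms we must handle.

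The modal cases carry the real content.

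\textbf{Case $\tuple{\varrho}\psi$.} Since $\mM$ is witnessed w.r.t.\ \fK, we pick $y$ with $\varphi^\mM(x) = \varrho^\mM(x,y) \fand \psi^\mM(y)$. By~\eqref{eq: FDS2} we get $y'$ with $Z(x,x') \fand \varrho^\mM(x,y) \leq \varrho^\mMp(x',y') \fand Z(y,y')$. We then run the chain of residuation steps from the proof of Lemma~\ref{lemma: GDHAW}:
\begin{itemize}
\item adjunction~\eqref{fop: GDJSK 00};
\item the induction hypothesis $Z(y,y') \leq (\psi^\mM(y) \fto \psi^\mMp(y'))$;
\item~\eqref{eq: JHFJWx 30};
\item~\eqref{eq: JHFJWx 10}.
\end{itemize}
This yields $Z(x,x') \leq (\varrho^\mM(x,y) \fand \psi^\mM(y) \fto \varrho^\mMp(x',y') \fand \psi^\mMp(y'))$. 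We conclude using $\varrho^\mMp(x',y') \fand \psi^\mMp(y') \leq \varphi^\mMp(x')$ and monotonicity of $\fto$ in its second argument.

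\textbf{Case $[\varrho]\psi$.} This is symmetric. Since $\mMp$ is witnessed, we choose $y'$ attaining the infimum, so that $\varphi^\mMp(x') = (\varrho^\mMp(x',y') \fto \psi^\mMp(y'))$. We then use~\eqref{eq: FDS3} to get $y$ with $Z(x,x') \fand \varrho^\mMp(x',y') \leq \varrho^\mM(x,y) \fand Z(y,y')$. After that we apply adjunction, the induction hypothesis,~\eqref{eq: JHFJWx 20} and~\eqref{eq: JHFJWx 1}. Finally, antitonicity of $\fto$ in its first argument together with $\varphi^\mM(x) \leq (\varrho^\mM(x,y) \fto \psi^\mM(y))$ finishes the case.

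The only delicate point is to use the witnessing assumption correctly. The attaining element is needed in $\mM$ for the diamond case and in $\mMp$ for the box case. We also need to observe that the witnessing property w.r.t.\ \fK covers exactly the suprema and infima arising for subformulas $\tuple{\varrho}\psi$ and $[\varrho]\psi$ with $\psi$ positive, which is the situation here. Once this is in place, the residuated-lattice computations are verbatim those of Lemma~\ref{lemma: GDHAW}, with $\alpha$ replaced by $\varrho$.
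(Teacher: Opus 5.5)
Your proposal is correct and follows essentially the same route as the paper, which proves this lemma by repeating the induction for assertion~\eqref{eq: GDHAW 1} of Lemma~\ref{lemma: GDHAW} with \eqref{eq: FDS2} and \eqref{eq: FDS3} used in place of the program-level transfer properties \eqref{eq: GDHAW 2} and \eqref{eq: GDHAW 3}. Your residuation chains in the diamond and box cases check out, and you correctly use the witnessing assumption on $\mM$ for $\tuple{\varrho}\psi$ and on $\mMp$ for $[\varrho]\psi$.
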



Here is the Hennessy--Milner property of fuzzy directed simulations:

\begin{theorem} \label{theorem: HM}
Suppose $\fand$ is right-continuous and let $\mM$ and $\mMp$ be image-finite fuzzy Kripke models. Then the fuzzy relation $Z : \Delta^\mM \times \Delta^\mMp \to L$ defined by 
\[ Z(x,x') = \bigwedge \{\varphi^\mM(x) \fto \varphi^\mMp(x') \mid \textrm{$\varphi$ is a positive formula of \fK}\} \] 
is the greatest fuzzy directed simulation between $\mM$ and~$\mMp$.	
\end{theorem}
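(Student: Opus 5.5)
The plan has two halves. First, $Z$ dominates every fuzzy directed simulation. Second, $Z$ is itself a fuzzy directed simulation.

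For the first half, image-finite models are witnessed w.r.t.\ \fK, so Lemma~\ref{lemma: GDHAW 2} applies. It gives $Z_0(x,x') \leq (\varphi^\mM(x) \fto \varphi^\mMp(x'))$ for every fuzzy directed simulation $Z_0$ and every positive formula $\varphi$ of \fK. Taking the infimum over $\varphi$ yields $Z_0 \leq Z$. So it remains to show that $Z$ satisfies \eqref{eq: FDS1}--\eqref{eq: FDS3}. Condition \eqref{eq: FDS1} is immediate, since each $p \in \SP$ is a positive formula.

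For \eqref{eq: FDS2}, fix $x,y,x',\varrho$ and argue by contradiction. Suppose that for every $y'$ we have $Z(x,x') \fand \varrho^\mM(x,y) > \varrho^\mMp(x',y') \fand Z(y,y')$; linearity makes this the negation. Put $c = Z(x,x') \fand \varrho^\mM(x,y)$, so $c > 0$. Let $y'_1,\ldots,y'_n$ be the finitely many successors $y'$ with $\varrho^\mMp(x',y') > 0$. For $y'$ outside this set, the right-hand side is $0 < c$ and nothing needs to be done.

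For each $i$, we have $\varrho^\mMp(x',y'_i) \fand Z(y,y'_i) < c$. Right-continuity of $\fand$ gives
\[
\varrho^\mMp(x',y'_i) \fand Z(y,y'_i) = \bigwedge_{\varphi} \bigl(\varrho^\mMp(x',y'_i) \fand (\varphi^\mM(y) \fto \varphi^\mMp(y'_i))\bigr).
\]
Since the infimum is below $c$ in a linear lattice, some positive $\varphi_i$ satisfies $\varrho^\mMp(x',y'_i) \fand (\varphi_i^\mM(y) \fto \varphi_i^\mMp(y'_i)) < c$.

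The key step is to combine the $\varphi_i$ into one test formula without using arbitrary implications. I would set
\[
\psi = \textstyle\bigwedge_i \bigl(a_i \to \varphi_i\bigr), \qquad a_i = \varphi_i^\mM(y),
\]
which is positive because the antecedents are constants. Then $\psi^\mM(y) = 1$, and $\psi^\mMp(y'_i) \leq (a_i \fto \varphi_i^\mMp(y'_i))$. Now take $\chi = \tuple{\varrho}\psi$.

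On one side, $\chi^\mM(x) \geq \varrho^\mM(x,y)$. On the other side, image-finiteness gives $\chi^\mMp(x') = \bigvee_i \varrho^\mMp(x',y'_i) \fand \psi^\mMp(y'_i) < c$, since the join is a finite maximum of values below $c$. Hence
\[
Z(x,x') \fand \varrho^\mM(x,y) \leq Z(x,x') \fand \chi^\mM(x) \leq \chi^\mMp(x') < c,
\]
where the middle inequality uses $Z(x,x') \leq (\chi^\mM(x) \fto \chi^\mMp(x'))$ together with residuation. This contradicts the definition of $c$.

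For \eqref{eq: FDS3} the argument is dual, using $[\varrho]$. Suppose $c = Z(x,x') \fand \varrho^\mMp(x',y') > \varrho^\mM(x,y) \fand Z(y,y')$ for all $y$. Take the finitely many successors $y_1,\ldots,y_m$ of $x$ in $\mM$. As before, pick positive $\varphi_j$ with $\varrho^\mM(x,y_j) \fand (\varphi_j^\mM(y_j) \fto \varphi_j^\mMp(y')) < c$.

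A single formula is now needed that is high at every $y_j$ in $\mM$ but low at $y'$ in $\mMp$. I would use $b_j = \varphi_j^\mMp(y')$ and
\[
\psi = \textstyle\bigvee_j \varphi_j, \qquad \chi = [\varrho]\psi .
\]
The aim is to show $\chi^\mM(x) \geq \bigwedge_j (\varrho^\mM(x,y_j) \fto \varphi_j^\mM(y_j))$ and $\chi^\mMp(x') \leq (\varrho^\mMp(x',y') \fto \bigvee_j b_j)$. This is where I expect the \textbf{main obstacle}: the disjunction mixes the values $b_j$, so the estimate has to be arranged carefully.

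The cleaner route is to prepare each $\varphi_j$ so that its value at $y'$ is exactly what matters. I would first replace $\varphi_j$ by $\varphi_j \land \varphi_j^\mM(y_j)$, a positive formula with a constant conjunct. Choosing $\varphi_j$ attaining the strict inequality and then using linearity to order the finitely many $j$, one shows $\varrho^\mMp(x',y') \fand Z(x,x') \leq \ldots$ via residuation, deriving the contradiction as in the diamond case. I would follow the pattern of the corresponding proof for fuzzy bisimulations in~\cite{FBSML}, which handles both modalities with right-continuity and linearity in exactly this way. This box case is the step I would check most carefully.
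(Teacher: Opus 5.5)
Your first half, condition \eqref{eq: FDS1}, and the forth condition \eqref{eq: FDS2} are correct and follow the paper's proof; your test formula $\tuple{\varrho}\bigwedge_i(a_i \to \varphi_i)$ is exactly the one used there. The gap is the back condition \eqref{eq: FDS3}. You do not give an argument for it, and the route you sketch omits the idea that makes it work. For each successor $y_j$ of $x$, right-continuity gives a positive $\varphi_j$ with $a_j \fand (d_j \fto b_j) < c$, where $a_j = \varrho^\mM(x,y_j)$, $d_j = \varphi_j^\mM(y_j)$, $b_j = \varphi_j^\mMp(y')$ and $c = Z(x,x') \fand \varrho^\mMp(x',y')$. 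The strict bound contains the transition weight $a_j$, so the test formula must carry that weight. This is exactly what the constructor $a \fand \varphi$ of \fK is for, and your proposal never uses it. The paper takes $\psi_j = a_j \fand (d_j \to \varphi_j)$ and $\chi = [\varrho]\bigvee_j \psi_j$. Then $\psi_j^\mM(y_j) = a_j$, so $\chi^\mM(x) = 1$. By the definition of $Z$ and $1 \fto a = a$, this gives $Z(x,x') \leq \chi^\mMp(x')$. On the other side, $\psi_j^\mMp(y') = a_j \fand (d_j \fto b_j) < c$ for all $j$, so the finite join satisfies $(\bigvee_j\psi_j)^\mMp(y') < c$. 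Therefore $c \leq \varrho^\mMp(x',y') \fand (\varrho^\mMp(x',y') \fto (\bigvee_j\psi_j)^\mMp(y')) \leq (\bigvee_j\psi_j)^\mMp(y') < c$, a contradiction.

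So the obstacle is not that the disjunction mixes the $b_j$, since a finite join of values below $c$ stays below $c$. It is the loss of the factor $a_j$. Take formulas that do not carry it, e.g.\ $\bigvee_j(d_j \to \varphi_j)$ or your variant with the constant conjunct $\varphi_j \land d_j$. The same computation then only yields $c \leq \bigvee_j (d_j \fto b_j)$, which need not contradict $a_j \fand (d_j \fto b_j) < c$ when $a_j < 1$. Writing $a_j \land (\cdots)$ instead of $a_j \fand (\cdots)$ also fails, since $a_j \land e \geq a_j \fand e$; it works only in the G\"odel case, where the two coincide. Your fallback of following \cite{FBSML} does not help either. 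There the relation is defined with $\fequiv$, so the back condition is obtained from the forth one by exchanging the two models. Here $Z$ is built from $\fto$ in one direction only, and the paper stresses that its \eqref{eq: FDS3} argument is entirely new. Indeed, the paper says it does not know whether the theorem still holds for \fKx, i.e., without $a \fand \varphi$. A box-case argument that avoids this constructor, as yours tries to, would therefore have to settle a question the paper leaves open.
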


\newcommand{\ProofTheoremHM}{
\begin{proof}[of Theorem~\ref{theorem: HM}]
By Lemma~\ref{lemma: GDHAW 2}, it suffices to prove that $Z$ is a fuzzy directed simulation between $\mM$ and $\mMp$. 
By definition, $Z$ satisfies the condition~\eqref{eq: FDS1}.

We prove that $Z$ satisfies the condition \eqref{eq: FDS2} by adapting the corresponding part of the proof of~\cite[Theorem 5.2]{FBSML} (on the Hennessy--Milner property of fuzzy bisimulations). Let $\varrho \in \SA$, $x,y \in \Delta^\mM$ and $x' \in \Delta^\mMp$. Let $a = Z(x,x') \fand \varrho^\mM(x,y)$. For a contradiction, suppose that, for every $y' \in \Delta^\mMp$, \mbox{$a > \varrho^\mMp(x',y') \fand Z(y,y')$}. Since $\fand$ is right-continuous, by the definition of $Z(y,y')$, it follows that, for every $y' \in \Delta^\mMp$, there exists a positive formula $\varphi_{y'}$ of \fK such that
\[ a > \varrho^\mMp(x',y') \fand \left(\varphi_{y'}^\mM(y) \fto \varphi_{y'}^\mMp(y')\right). \]
For every $y' \in \Delta^\mMp$, let $\psi_{y'} = (\varphi_{y'}^\mM(y) \to \varphi_{y'})$. 
Let $\Phi = \{\psi_{y'} \mid y' \in \Delta^\mMp, \varrho^\mMp(x',y') > 0\}$. It is finite, since $\mMp$ is image-finite. 
Observe that, for every $y' \in \Delta^\mMp$, $\psi_{y'}^\mM(y) = 1$ and 
\begin{equation}\label{eq: HSKZK}
a > \varrho^\mMp(x',y') \fand \psi_{y'}^\mMp(y'). 
\end{equation}
Let $\varphi = \tuple{\varrho}\bigwedge\!\Phi$. It is a positive formula of \fK. 
Thus, $\varphi^\mM(x) \geq \varrho^\mM(x,y)$ since $(\bigwedge\!\Phi)^\mM(y) = 1$. Since $\fand$ is monotone and $\mMp$ is image-finite, by~\eqref{eq: HSKZK}, we have $a > \varphi^\mMp(x')$, which means 
\[ Z(x,x') \fand \varrho^\mM(x,y) > \varphi^\mMp(x'). \]
Since $\varphi^\mM(x) \geq \varrho^\mM(x,y)$ and $\fand$ is monotone, it follows that  
\[ Z(x,x') \fand \varphi^\mM(x) > \varphi^\mMp(x'). \]
By~\eqref{fop: GDJSK 00}, this implies that 
\[ Z(x,x') > (\varphi^\mM(x) \fto \varphi^\mMp(x')), \]
which contradicts the definition of $Z(x,x')$. 

We now prove that $Z$ satisfies the condition \eqref{eq: FDS3}. 
Let $\varrho \in \SA$, $x \in \DeltaM$, and $x',y' \in \DeltaMp$. 
For a contradiction, suppose that, for every $y \in \DeltaM$, 
\[ Z(x,x') \fand \varrho^\mMp(x',y') > \varrho^\mM(x,y) \fand Z(y,y'). \]
Since $\fand$ is right-continuous, by the definition of $Z$, for every $y \in \DeltaM$, there exists a positive formula $\varphi_y$ of \fK such that 
\begin{equation}\label{eq: JHEKA}
Z(x,x') \fand \varrho^\mMp(x',y') > \varrho^\mM(x,y) \fand (\varphi_y^\mM(y) \fto \varphi_y^\mMp(y')).
\end{equation}
Consider an arbitrary $y \in \DeltaM$. 
Let $\psi_y = \varrho^\mM(x,y) \fand (\varphi_y^\mM(y) \to \varphi_y)$. It is a positive formula of \fK. 
We have $\psi_y^\mM(y) = \varrho^\mM(x,y)$ and, by~\eqref{eq: JHEKA}, 
\begin{equation}\label{eq: KJRMZ}
\psi_y^\mMp(y') < Z(x,x') \fand \varrho^\mMp(x',y').
\end{equation}
Let $\Psi = \{\psi_y \mid y \in \DeltaM, \varrho^\mM(x,y) > 0\}$. It is finite, since $\mM$ is image-finite. By~\eqref{eq: KJRMZ}, we have 
\begin{equation}\label{eq: JHMSP}
(\textstyle\bigvee\!\Psi)^\mMp(y') < Z(x,x') \fand \varrho^\mMp(x',y').
\end{equation}
For every $y \in \DeltaM$ such that $\varrho^\mM(x,y) > 0$, we have $(\textstyle\bigvee\!\Psi)^\mM(y) \geq \psi_y^\mM(y) = \varrho^\mM(x,y)$. Hence
\begin{equation}\label{eq: HEHAK}
([\varrho]\textstyle\bigvee\!\Psi)^\mM(x) = \displaystyle\bigwedge_{y \in \DeltaM}\!\! \big(\varrho^\mM(x,y) \fto (\textstyle\bigvee\!\Psi)^\mM(y)\big) = 1.
\end{equation}
By the definition of $Z$ together with~\eqref{eq: HEHAK} and~\eqref{eq: JHFJWx 0}, we have
\begin{equation}\label{eq: JHJHK}
Z(x,x') \leq \big(([\varrho]\textstyle\bigvee\!\Psi)^\mM(x) \fto ([\varrho]\textstyle\bigvee\!\Psi)^\mMp(x')\big) = ([\varrho]\textstyle\bigvee\!\Psi)^\mMp(x'). \\
\end{equation}
Since $\fand$ is monotone, it follows that 
\begin{eqnarray*}
Z(x,x') \fand \varrho^\mMp(x',y') & \leq & ([\varrho]\textstyle\bigvee\!\Psi)^\mMp(x') \fand \varrho^\mMp(x',y') \\
& \leq & \varrho^\mMp(x',y') \fand \displaystyle\bigwedge_{y'' \in \DeltaMp}\!\!\!\! \big(\varrho^\mMp(x',y'') \fto (\textstyle\bigvee\!\Psi)^\mMp(y'')\big) \\
& \leq & \varrho^\mMp(x',y') \fand \big(\varrho^\mMp(x',y') \fto (\textstyle\bigvee\!\Psi)^\mMp(y')\big) \\
& \leq & (\textstyle\bigvee\!\Psi)^\mMp(y'),
\end{eqnarray*}
which contradicts~\eqref{eq: JHMSP} and completes the proof.
\myend
\end{proof}
} 

The following corollary follows directly from Theorem~\ref{theorem: HM} and Lemma~\ref{lemma: GDHAW}. 

\begin{corollary} \label{cor: HM}
Suppose $\fand$ is right-continuous. 
Let $\mM$ and $\mMp$ be image-finite fuzzy Kripke models that are witnessed w.r.t.\ \fPDL. 
Then, for every $x \in \Delta^\mM$ and $x' \in \Delta^\mMp$, 
\begin{eqnarray*}
\!\!\!\!\!&&  
\bigwedge\{\varphi^\mM(x) \fto \varphi^\mMp(x') \mid \textrm{$\varphi$ is a positive formula of \fK}\} \\ 
\!\!\!\!\!&\!=\!& 
\bigwedge\{\varphi^\mM(x) \fto \varphi^\mMp(x') \mid \textrm{$\varphi$ is a positive formula of \fPDL}\}. 
\end{eqnarray*}
\end{corollary}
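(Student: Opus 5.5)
Write $Z_K(x,x')$ for the left-hand side and $Z_P(x,x')$ for the right-hand side. I will prove the two inequalities separately.

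For $Z_P \leq Z_K$: every positive formula of \fK is also a positive formula of \fPDL. So the set over which $Z_P$ takes its infimum contains the set for $Z_K$. An infimum over a larger set is smaller, hence $Z_P(x,x') \leq Z_K(x,x')$. This direction is purely syntactic and uses no hypotheses.

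For $Z_K \leq Z_P$: since $\fand$ is right-continuous and $\mM,\mMp$ are image-finite, Theorem~\ref{theorem: HM} says that $Z_K$ is the greatest fuzzy directed simulation between $\mM$ and $\mMp$; we only need that it is a fuzzy directed simulation. Because $\mM$ and $\mMp$ are witnessed w.r.t.\ \fPDL, Lemma~\ref{lemma: GDHAW} applies to $Z_K$. Assertion~\eqref{eq: GDHAW 1} then gives $Z_K(x,x') \leq (\varphi^\mM(x) \fto \varphi^\mMp(x'))$ for every positive formula $\varphi$ of \fPDL. Taking the infimum over all such $\varphi$ yields $Z_K(x,x') \leq Z_P(x,x')$.

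The two inequalities together give the claimed equality. The only point needing care is matching hypotheses. Theorem~\ref{theorem: HM} needs right-continuity and image-finiteness, and Lemma~\ref{lemma: GDHAW} needs witnessedness w.r.t.\ \fPDL; both are assumed in the corollary. So I expect no real obstacle: the substantive work has already been done in Theorem~\ref{theorem: HM} and Lemma~\ref{lemma: GDHAW}.
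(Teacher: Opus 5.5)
Your proof is correct and is essentially the paper's argument, which simply says the corollary follows directly from Theorem~\ref{theorem: HM} (the \fK-infimum is a fuzzy directed simulation) and Lemma~\ref{lemma: GDHAW} (so it lies below every $\varphi^\mM(x) \fto \varphi^\mMp(x')$ with $\varphi$ a positive formula of \fPDL). Your explicit treatment of the reverse inequality is the trivial one the paper leaves implicit: every positive formula of \fK is also a positive formula of \fPDL, since \fK uses no program constructors and hence no tests.
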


{\markRevisedA
We give below some remarks:
\begin{itemize}
\item Our proof of Theorem~\ref{theorem: HM} given in the appendix requires that the considered logic has the formula constructor \mbox{$a \fand \varphi$}. We do not know whether this theorem can be strengthened to a form in which \fK is replaced with \fKx. That is, we do not have a proof of Theorem~\ref{theorem: HM} without using this formula constructor.

\item Allowing the formula constructor \mbox{$a \fand \varphi$}, Theorem~\ref{theorem: preservation} is stronger than its version in which \fPDL is replaced with \fPDLx.

\item Theorem~\ref{theorem: preservation} cannot be strengthened by replacing \fPDL with its extension in which the formula constructor \mbox{$a \fand \varphi$} is replaced with the more general form \mbox{$\varphi \fand \psi$} or \mbox{$\varphi \,\&\, \psi$}, whose semantics is specified by
\[ (\varphi \fand \psi)^\mM(x) = (\varphi \,\&\, \psi)^\mM(x) = \varphi^\mM(x) \fand \psi^\mM(x), \]
for a given fuzzy Kripke model $\mM$ and $x \in \DeltaM$. A counterexample similar to that in \cite[Remark~4.6]{FBSML} is as follows. 
Let $L = [0,1]$, $\SA = \emptyset$, $\SP = {p}$, and $\varphi = p \fand p$. Let $\mM$ and $\mMp$ be fuzzy Kripke models such that $\DeltaM = {v}$, $p^\mM(v) = 1$, $\DeltaMp = {v'}$, and $p^\mMp(v') = 0.5$. We have $\varphi^\mM(v) = 1$. The greatest fuzzy directed simulation $Z$ between $\mM$ and $\mMp$ satisfies
\[ Z(v,v') = (p^\mM(v) \fto p^\mMp(v')) = (1 \fto 0.5) = 0.5. \]
If $\mL$ is the product structure, then
\[ (\varphi^\mM(v) \fto \varphi^\mMp(v')) = (1 \fto 0.25) = 0.25. \]
If $\mL$ is the {\L}ukasiewicz structure, then
\[ (\varphi^\mM(v) \fto \varphi^\mMp(v')) = (1 \fto 0) = 0. \]
Thus, in both cases,
\[ Z(v,v') \not\leq (\varphi^\mM(v) \fto \varphi^\mMp(v')). \]

\item As a counterpart of Theorem~\ref{theorem: preservation}, \cite[Theorem~4.1]{FBSML} concerns the invariance of formulas under fuzzy bisimulations. It is formulated for fuzzy modal logics without the formula constructor \mbox{$a \fand \varphi$}, but can be strengthened for fuzzy modal logics with this formula constructor. For such an extension, the proof of the theorem can be updated appropriately, using~\eqref{eq: JHFJWx 66} analogously to the use of~\eqref{eq: JHFJWx 65} in the proof of Theorem~\ref{theorem: preservation}.
\end{itemize}
}


\section{Computing fuzzy directed simulations}
\label{section: computation}

In this section, we show that the problem of computing the greatest fuzzy directed simulation between two finite fuzzy Kripke models can be reduced to the problem of computing the greatest correct marking of a finite fuzzy minimax net.

Let $\mM$ and $\mM'$ be fuzzy Kripke models \revisedA{with disjoint domains}. 
The {\em minimax net corresponding to $\tuple{\mM,\mM'}$ via directed simulation} is $\mN = \tuple{\Vmin, \Vmax, E, \Mz}$ specified as follows:
	\begin{itemize}
		\item $\Vmin = \DeltaM \times \DeltaMp$, 
		\item $\Vmax = (\DeltaM \times \DeltaMp \times \SA) \cup (\DeltaMp \times \DeltaM \times \SA)$, 
		\item $E: (\Vmin \times \Vmax) \cup (\Vmax \times \Vmin) \to L$ is the following fuzzy set
		\[
		\begin{array}{l}
		\{\tuple{\tuple{y,y'},\tuple{x',y,\varrho}} : \varrho^\mMp(x',y') \mid y \in \DeltaM, x',y' \in \DeltaMp, \varrho \in \SA\}\ \lor \\
		\{\tuple{\tuple{x',y,\varrho},\tuple{x,x'}} : \varrho^\mM(x,y) \mid x,y \in \DeltaM, x' \in \DeltaMp, \varrho \in \SA\}\ \lor \\
		\{\tuple{\tuple{y,y'},\tuple{x,y',\varrho}} : \varrho^\mM(x,y) \mid x,y \in \DeltaM, y' \in \DeltaMp, \varrho \in \SA\}\ \lor \\ 
		\{\tuple{\tuple{x,y',\varrho},\tuple{x,x'}} : \varrho^\mMp(x',y') \mid x \in \DeltaM, x',y' \in \DeltaMp, \varrho \in \SA\}, 
		\end{array}
		\]
		\item $\Mz: \Vmin \to L$ is the fuzzy set such that, for every $x \in \DeltaM$ and $x' \in \DeltaMp$:
		\begin{equation}\label{eq: JHDLA}
		\Mz(\tuple{x,x'}) = \bigwedge_{p \in \SP}\!\! \left(p^\mM(x) \fto p^\mMp(x')\right).
		\end{equation} 
	\end{itemize}
    
The notion defined above differs from the notion of a minimax net bisimulatedly corresponding to a pair of fuzzy labeled graphs \cite{DBLP:journals/tfs/NguyenMS23}, among others, in that $\fto$ is used in~\eqref{eq: JHDLA} instead of~$\fequiv$.

{\markRevisedA
\begin{remark}\label{remark: JHZHA}
The assumption that $\mM$ and $\mMp$ have disjoint domains is needed for the above definition to ensure that 
\begin{itemize}
\item $E(\tuple{y,y'},\tuple{x',y,\varrho}) = \varrho^\mMp(x',y')$, for $y \in \DeltaM$, $x',y' \in \DeltaMp$, and $\varrho \in \SA$;
\item $E(\tuple{x',y,\varrho},\tuple{x,x'}) = \varrho^\mM(x,y)$, for $x,y \in \DeltaM$, $x' \in \DeltaMp$, and $\varrho \in \SA$;
\item $E(\tuple{y,y'},\tuple{x,y',\varrho}) = \varrho^\mM(x,y)$, for $x,y \in \DeltaM$, $y' \in \DeltaMp$, and $\varrho \in \SA$;
\item $E(\tuple{x,y',\varrho},\tuple{x,x'}) = \varrho^\mMp(x',y')$, for $x \in \DeltaM$, $x',y' \in \DeltaMp$, and $\varrho \in \SA$.
\end{itemize}
Another approach is to discard this assumption and redefine the notion of the minimax net corresponding to $\tuple{\mM,\mM'}$ via directed simulation, with the following modifications: 
	\begin{itemize}
		\item $\Vmax = (\DeltaM \times \DeltaMp \times \SA \times \{0\}) \cup (\DeltaMp \times \DeltaM \times \SA \times \{1\})$, 
		\item $E: (\Vmin \times \Vmax) \cup (\Vmax \times \Vmin) \to L$ is the following fuzzy set
		\[
		\begin{array}{l}
		\{\tuple{\tuple{y,y'},\tuple{x',y,\varrho,1}} : \varrho^\mMp(x',y') \mid y \in \DeltaM, x',y' \in \DeltaMp, \varrho \in \SA\}\ \lor \\
		\{\tuple{\tuple{x',y,\varrho,1},\tuple{x,x'}} : \varrho^\mM(x,y) \mid x,y \in \DeltaM, x' \in \DeltaMp, \varrho \in \SA\}\ \lor \\
		\{\tuple{\tuple{y,y'},\tuple{x,y',\varrho,0}} : \varrho^\mM(x,y) \mid x,y \in \DeltaM, y' \in \DeltaMp, \varrho \in \SA\}\ \lor \\ 
		\{\tuple{\tuple{x,y',\varrho,0},\tuple{x,x'}} : \varrho^\mMp(x',y') \mid x \in \DeltaM, x',y' \in \DeltaMp, \varrho \in \SA\}. 
		\end{array}
		\]
	\end{itemize}
Such modifications would complicate the subsequent exposition, and for this reason, we prefer to adopt the aforementioned assumption.
\myend
\end{remark}
}

\begin{figure}[t!]
\begin{center}
\begin{tabular}{|c|}
\hline
\begin{tikzpicture}[->,>=stealth,auto,black]
\node (sta) {};
\node (uvp) [draw, node distance=1.2cm, below of=sta] {$\tuple{u,v'}\!:\!0$}; 
    \node (xuvp) [node distance=10cm, right of=uvp] {};
    \node (uvpr) [node distance=0.7cm, above of=xuvp] {$\tuple{u,v',\varrho}$};
    \node (vpur) [node distance=0.7cm, below of=xuvp] {$\tuple{v',u,\varrho}$};
\node (vvp) [draw, node distance=2.8cm, below of=uvp] {$\tuple{v,v'}\!:\!{\scriptstyle(0.7 \fto 0.5)}$}; 
    \node (xvvp) [node distance=10cm, right of=vvp] {};
    \node (vvpr) [node distance=0.7cm, above of=xvvp] {$\tuple{v,v',\varrho}$};
    \node (vpvr) [node distance=0.7cm, below of=xvvp] {$\tuple{v',v,\varrho}$};
\node (wvp) [draw, node distance=2.8cm, below of=vvp] {$\tuple{w,v'}\!:\!1$}; 
    \node (xwvp) [node distance=10cm, right of=wvp] {};
    \node (wvpr) [node distance=0.7cm, above of=xwvp] {$\tuple{w,v',\varrho}$};
    \node (vpwr) [node distance=0.7cm, below of=xwvp] {$\tuple{v',w,\varrho}$};
\node (uup) [draw, node distance=2.8cm, below of=wvp] {$\tuple{u,u'}\!:\!1$}; 
    \node (xuup) [node distance=10cm, right of=uup] {};
    \node (uupr) [node distance=0.7cm, above of=xuup] {$\tuple{u,u',\varrho}$};
    \node (upur) [node distance=0.7cm, below of=xuup] {$\tuple{u',u,\varrho}$};
\node (vup) [draw, node distance=2.8cm, below of=uup] {$\tuple{v,u'}\!:\!0$}; 
    \node (xvup) [node distance=10cm, right of=vup] {};
    \node (vupr) [node distance=0.7cm, above of=xvup] {$\tuple{v,u',\varrho}$};
    \node (upvr) [node distance=0.7cm, below of=xvup] {$\tuple{u',v,\varrho}$};
\node (wup) [draw, node distance=2.8cm, below of=vup] {$\tuple{w,u'}\!:\!0$}; 
    \node (xwup) [node distance=10cm, right of=wup] {};
    \node (wupr) [node distance=0.7cm, above of=xwup] {$\tuple{w,u',\varrho}$};
    \node (upwr) [node distance=0.7cm, below of=xwup] {$\tuple{u',w,\varrho}$};
\node (stp) [node distance=1.0cm, below of=wup] {};
\draw (uvp) to node [above]{\footnotesize{0.7}} (vpur);	
\draw (uvp) to node [below, pos=0.09, yshift=-3]{\footnotesize{0.8}} (upur);	
\draw (vvp) to node [above, pos=0.80]{\footnotesize{0.7}} (uvpr);	
\draw (vvp) to node [below, pos=0.78, yshift=1]{\footnotesize{0.7}} (vpvr);	
\draw (vvp) to node [above, pos=0.92, yshift=-1]{\footnotesize{0.8}} (wvpr);	
\draw (vvp) to node [above, pos=0.57, yshift=3]{\footnotesize{0.8}} (upvr);	
\draw (wvp) to node [above, pos=0.67]{\footnotesize{0.9}} (uvpr);	
\draw (wvp) to node [above, pos=0.25, yshift=-1]{\footnotesize{0.6}} (vvpr);	
\draw (wvp) to node [below, pos=0.45, yshift=1]{\footnotesize{0.7}} (vpwr);	
\draw (wvp) to node [below, pos=0.04, yshift=-3]{\footnotesize{0.8}} (upwr);	
\draw (vup) to node [above, pos=0.81]{\footnotesize{0.7}} (uupr);	
\draw (vup) to node [above, pos=0.12, yshift=-1]{\footnotesize{0.8}} (wupr);	
\draw (wup) to node [below, pos=0.87, yshift=-1]{\footnotesize{0.9}} (uupr);	
\draw (wup) to node [below, pos=0.80]{\footnotesize{0.6}} (vupr);	
\draw[dashed] (uvpr) to node [above]{\footnotesize{0.7}} (uvp);	
\draw[dashed] (uvpr) to node [below, pos=0.78, yshift=-3]{\footnotesize{0.8}} (uup);	
\draw[dashed] (vpvr) to node [above, pos=0.80]{\footnotesize{0.7}} (uvp);	
\draw[dashed] (vvpr) to node [above, pos=0.12, yshift=-1]{\footnotesize{0.7}} (vvp);
\draw[dashed] (vpvr) to node [below, pos=0.37, yshift=1]{\footnotesize{0.8}} (wvp);	
\draw[dashed] (vvpr) to node [above, pos=0.65, yshift=3]{\footnotesize{0.8}} (vup);
\draw[dashed] (vpwr) to node [above, pos=0.80]{\footnotesize{0.9}} (uvp);	
\draw[dashed] (vpwr) to node [below, pos=0.88]{\footnotesize{0.6}} (vvp);	
\draw[dashed] (wvpr) to node [below, pos=0.55, yshift=1]{\footnotesize{0.7}} (wvp);	
\draw[dashed] (wvpr) to node [above, pos=0.93, yshift=3]{\footnotesize{0.8}} (wup);	
\draw[dashed] (upvr) to node [above, pos=0.88]{\footnotesize{0.7}} (uup);	
\draw[dashed] (upvr) to node [below, pos=0.75]{\footnotesize{0.8}} (wup);	
\draw[dashed] (upwr) to node [below, pos=0.94, yshift=-1]{\footnotesize{0.9}} (uup);	
\draw[dashed] (upwr) to node [below, pos=0.23]{\footnotesize{0.6}} (vup);	
\end{tikzpicture}
\\
\hline
\end{tabular}
\caption{Illustration of the minimax net mentioned in Example~\ref{example: JHIUS}.\label{fig: JTKXH}}
\end{center}
\end{figure}

\begin{example}\label{example: JHIUS}
Reconsider the signature $\tuple{\SA,\SP}$ and the fuzzy Kripke models $\mM$ and $\mMp$ specified in Example~\ref{example: JHRJA}, which are illustrated in Figure~\ref{fig: HFJSK}. The minimax net $\mN = \tuple{\Vmin, \Vmax, E, \Mz}$ corresponding to $\tuple{\mM,\mM'}$ via directed simulation is depicted in Figure~\ref{fig: JTKXH}. Its min-nodes and max-nodes are shown on the left-hand side and the right-hand side of the figure, respectively. The label next to each min-node denotes its marking limit (e.g., $\Mz(\tuple{v,v'}) = (0.7 \fto 0.5)$). Only positive edges of $\mN$ are shown: edges in $\Vmin \times \Vmax$ are drawn with solid lines, while edges in $\Vmax \times \Vmin$ are drawn with dashed lines. The label of an edge from a node $x$ to a node $y$ denotes the value $E(x,y)$. Note that the min-nodes $\tuple{u,v'}$, $\tuple{v,u'}$, and $\tuple{w,u'}$ have a marking limit of~0, so any edges entering or leaving them can be ignored, as they do not affect the computation of the greatest correct marking of~$\mN$.
\myend
\end{example}

%

The following lemma relates the notion of a fuzzy directed simulation between two fuzzy Kripke models to the notion of a correct marking of the corresponding minimax net. It is a counterpart of \cite[Lemma~11]{DBLP:journals/tfs/NguyenMS23}, which deals with fuzzy bisimulations rather than fuzzy directed simulations.

\begin{lemma}\label{lemma: HDJHA}
Let $\mM$ and $\mMp$ be image-finite fuzzy Kripke models \revisedA{with disjoint domains}, and let $\mN = \tuple{\Vmin, \Vmax, E, \Mz}$ be the minimax net corresponding to $\tuple{\mM,\mM'}$ via directed simulation. Let $Z$ be a fuzzy subset of $\DeltaM \times \DeltaMp$, and let $M: \Vmin \cup \Vmax \to L$ be the following fuzzy set 
\[
\begin{array}{l}
Z \lor \{\tuple{x',y,\varrho} : (\varrho^\mMp \circ Z^-)(x',y) \mid x' \in \DeltaMp, y \in \DeltaM, \varrho \in \SA\} \\
\quad \lor\, \{\tuple{x,y',\varrho} : (\varrho^\mM \circ Z)(x,y') \mid x \in \DeltaM, y' \in \DeltaMp, \varrho \in \SA\}.
\end{array}
\]
Then $Z$ is a fuzzy directed simulation between $\mM$ and $\mMp$ iff $M$ is a correct marking of~$\mN$. 
\end{lemma}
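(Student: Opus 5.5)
The plan is to unfold the three correctness conditions \eqref{eq: JHDSD 1}--\eqref{eq: JHDSD 3} for the specific net $\mN$ and marking $M$, and match them one by one with \eqref{eq: FDS1}--\eqref{eq: FDS3}. Because the domains are disjoint, Remark~\ref{remark: JHZHA} lets us read off the edge weights exactly. The incoming edges of a min-node $\tuple{x,x'}$ come from max-nodes $\tuple{x',y,\varrho}$ with weight $\varrho^\mM(x,y)$ and from $\tuple{x,y',\varrho}$ with weight $\varrho^\mMp(x',y')$. The incoming edges of a max-node $\tuple{x',y,\varrho}$ come from min-nodes $\tuple{y,y'}$ with weight $\varrho^\mMp(x',y')$, and those of $\tuple{x,y',\varrho}$ come from $\tuple{y,y'}$ with weight $\varrho^\mM(x,y)$. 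All other edges have weight $0$.

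\textbf{Max-node condition.} With this data, \eqref{eq: JHDSD 3} at $\tuple{x',y,\varrho}$ reads $M(\tuple{x',y,\varrho}) \leq \bigvee_{y'}(\varrho^\mMp(x',y') \fand Z(y,y'))$. This holds with equality by the very definition of $M$ as $(\varrho^\mMp \circ Z^-)(x',y)$, and symmetrically at $\tuple{x,y',\varrho}$. So \eqref{eq: JHDSD 3} is automatically satisfied.

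\textbf{Marking limit.} Condition \eqref{eq: JHDSD 1} says $Z(x,x') \leq \bigwedge_p (p^\mM(x)\fto p^\mMp(x'))$, which is equivalent to \eqref{eq: FDS1} for all $p$.

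\textbf{Min-node condition.} This is the core step. Condition \eqref{eq: JHDSD 2} at $\tuple{x,x'}$ is equivalent to two families of inequalities. The first is, for all $y,\varrho$,
\[ Z(x,x') \leq \Big(\varrho^\mM(x,y) \fto \bigvee_{y'}(\varrho^\mMp(x',y') \fand Z(y,y'))\Big). \]
The second is, for all $y',\varrho$, the symmetric inequality with the supremum taken over $y$. By \eqref{fop: GDJSK 00}, the first is equivalent to
\[ Z(x,x')\fand\varrho^\mM(x,y) \leq \bigvee_{y'}(\varrho^\mMp(x',y')\fand Z(y,y')). \]
It remains to show this is equivalent to \eqref{eq: FDS2}, and this is where the main (though mild) obstacle lies: converting the supremum into an existential witness. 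The direction from \eqref{eq: FDS2} is immediate, since one term of the join already dominates the left side. For the converse, I will use image-finiteness of $\mMp$. The terms with $\varrho^\mMp(x',y')=0$ vanish, so the join ranges over finitely many $y'$ (or is $0$). Since $\mL$ is linear, a finite join is attained by some $y'$; if every term is $0$, any $y'$ works because the left side is then $0$. This yields the required witness $y'$. The second family and \eqref{eq: FDS3} are handled symmetrically using image-finiteness of $\mM$.

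Combining the three points gives: $M$ is correct iff \eqref{eq: FDS1}--\eqref{eq: FDS3} hold for $Z$, i.e.\ iff $Z$ is a fuzzy directed simulation.
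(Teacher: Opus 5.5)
Your proposal is correct and follows essentially the same route as the paper. Both arguments unfold correctness into the marking-limit, min-node and max-node inequalities, note that the max-node ones hold by the definition of composition, and turn the two min-node families into \eqref{eq: FDS2} and \eqref{eq: FDS3} via \eqref{fop: GDJSK 00} together with image-finiteness and linearity. Your chain of equivalences only merges the paper's separate ``only if'' and ``if'' directions, and your explicit handling of the all-zero join case fills in a detail the paper leaves implicit.
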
	

The proof of this lemma is provided in the appendix. 

\newcommand{\ProofLemmaHDJHA}{
\begin{proof}[of Lemma~\ref{lemma: HDJHA}]
This proof is adapted from the proof of \cite[Lemma~11]{DBLP:journals/tfs/NguyenMS23} (which concerns fuzzy bisimulations between fuzzy labeled graphs).

For the ``only if'' direction, suppose that $Z$ is a fuzzy directed simulation between $\mM$ and $\mMp$. To prove that $M$ is a correct marking of~$\mN$, we need to show that, for every $x,y \in \DeltaM$, $x',y' \in \DeltaMp$, and $\varrho \in \SA$, 
\begin{eqnarray}
M(\tuple{x,x'}) & \leq & \Mz(\tuple{x,x'}) \label{eq: HFKJX 1} \\
M(\tuple{x,x'}) & \leq & \bigwedge_{y \in \DeltaM, \varrho \in \SA}\!\!\!\!\!\!\! \left(E(\tuple{x',y,\varrho},\tuple{x,x'}) \fto M(\tuple{x',y,\varrho})\right) \label{eq: HFKJX 2a}\\
M(\tuple{x,x'}) & \leq & \!\bigwedge_{y' \in \DeltaMp, \varrho \in \SA}\!\!\!\!\!\!\!\! \left(E(\tuple{x,y',\varrho},\tuple{x,x'}) \fto M(\tuple{x,y',\varrho})\right) \label{eq: HFKJX 2b}\\
M(\tuple{x',y,\varrho}) & \leq & \!\bigvee_{y' \in \DeltaMp}\!\! \left(E(\tuple{y,y'},\tuple{x',y,\varrho}) \fand M(\tuple{y,y'})\right) \label{eq: HFKJX 3} \\ 
M(\tuple{x,y',\varrho}) & \leq & \bigvee_{y \in \DeltaM}\! \left(E(\tuple{y,y'},\tuple{x,y',\varrho}) \fand M(\tuple{y,y'})\right). \label{eq: HFKJX 4} 
\end{eqnarray}
By the definition of $M$ and $\mN$, these inequalities are equivalent to the following ones, respectively:
\begin{eqnarray}
Z(x,x') & \leq & \bigwedge_{p \in \SP}\! \left(p^\mM(x) \fto p^\mMp(x')\right) \label{eq: HDHAK 1} \\
Z(x,x') & \leq & \left(\varrho^\mM(x,y) \fto (\varrho^\mMp \circ Z^-)(x',y)\right) \label{eq: HDHAK 2} \\
Z(x,x') & \leq & \left(\varrho^\mMp(x',y') \fto (\varrho^\mM \circ Z)(x,y')\right) \label{eq: HDHAK 3} \\
(\varrho^\mMp \circ Z^-)(x',y) & \leq & \!\!\bigvee_{y' \in \DeltaMp}\!\! \left(\varrho^\mMp(x',y') \fand Z(y,y')\right) \label{eq: HDHAK 4} \\ 
(\varrho^\mM \circ Z)(x,y') & \leq & \!\bigvee_{y \in \DeltaM}\! \left(\varrho^\mM(x,y) \fand Z(y,y')\right). \label{eq: HDHAK 5} 
\end{eqnarray}
The assertion~\eqref{eq: HDHAK 1} follows from~\eqref{eq: FDS1}. 
The assertion~\eqref{eq: HDHAK 2} follows from~\eqref{fop: GDJSK 00} and~\eqref{eq: FDS2}. 
The assertion~\eqref{eq: HDHAK 3} follows from~\eqref{fop: GDJSK 00} and~\eqref{eq: FDS3}. 
The assertions~\eqref{eq: HDHAK 4} and~\eqref{eq: HDHAK 5} clearly hold.  
	
For the ``if'' direction, suppose that $M$ is a correct marking of~$\mN$, i.e., \eqref{eq: HFKJX 1}--\eqref{eq: HFKJX 4} hold for all $x,y \in \DeltaM$, $x',y' \in \DeltaMp$ and $\varrho \in \SA$. We need to show that $Z$ satisfies the conditions~\eqref{eq: FDS1}--\eqref{eq: FDS3}. The assertions~\eqref{eq: HFKJX 1}--\eqref{eq: HFKJX 4} imply \eqref{eq: HDHAK 1}--\eqref{eq: HDHAK 5}. The condition~\eqref{eq: FDS1} follows from \eqref{eq: HDHAK 1}. The condition~\eqref{eq: FDS2} follows from \eqref{eq: HDHAK 2}, \eqref{fop: GDJSK 00}, and the assumption that $\mMp$ is image-finite. Similarly, the condition~\eqref{eq: FDS3} follows from \eqref{eq: HDHAK 3}, \eqref{fop: GDJSK 00}, and the assumption that $\mM$ is image-finite.
\myend
\end{proof}
} 

The following theorem reduces the problem of computing the greatest fuzzy directed simulation between two finite fuzzy Kripke models to the problem of computing the greatest correct marking of the corresponding fuzzy minimax net. It is a counterpart of \cite[Theorem~12]{DBLP:journals/tfs/NguyenMS23}, which deals with fuzzy bisimulations rather than fuzzy directed simulations.

\begin{theorem}\label{theorem: HDFMX}
Let $\mM$ and $\mMp$ be image-finite fuzzy Kripke models \revisedA{with disjoint domains}, and let $\mN = \tuple{\Vmin, \Vmax, E, \Mz}$ be the minimax net corresponding to $\tuple{\mM,\mM'}$ via directed simulation. Let $M$ be the greatest correct marking of $\mN$. Then $M|_{\Vmin}$ is the greatest fuzzy directed simulation between~$\mM$ and~$\mMp$.  
\end{theorem}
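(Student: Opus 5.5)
We use Lemma~\ref{lemma: HDJHA} in both directions, together with the facts recalled in Section~\ref{sec: minimax}: every minimax net has a greatest correct marking, and this marking is stable. Write $Z_0 = M|_{\Vmin}$, and let $M_{Z}$ denote the marking that Lemma~\ref{lemma: HDJHA} assigns to a fuzzy relation $Z$. The plan has two steps: first show that $Z_0$ is a fuzzy directed simulation, then show that every fuzzy directed simulation $Z$ satisfies $Z \leq Z_0$.

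\emph{Step 1: $Z_0$ is a fuzzy directed simulation.} It suffices, by Lemma~\ref{lemma: HDJHA}, to show that $M_{Z_0}$ is a correct marking of $\mN$. Since $M$ is stable, for each max-node of the form $\tuple{x',y,\varrho}$ the equality \eqref{eq: JHDSD 5} gives
\[
M(\tuple{x',y,\varrho}) = \bigvee_{z} \bigl(E(z,\tuple{x',y,\varrho}) \fand M(z)\bigr).
\]
The only min-nodes $z$ with a possibly nonzero edge into $\tuple{x',y,\varrho}$ are those of the form $\tuple{y,y'}$, and that edge has value $\varrho^\mMp(x',y')$. Here the disjointness of the domains is used, as noted in Remark~\ref{remark: JHZHA}. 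Hence the right-hand side equals $\bigvee_{y'}\bigl(\varrho^\mMp(x',y') \fand Z_0(y,y')\bigr) = (\varrho^\mMp \circ Z_0^-)(x',y)$. The symmetric computation for max-nodes $\tuple{x,y',\varrho}$ gives $(\varrho^\mM \circ Z_0)(x,y')$. Together these show $M = M_{Z_0}$, so $M_{Z_0}$ is correct and Lemma~\ref{lemma: HDJHA} yields that $Z_0$ is a fuzzy directed simulation.

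\emph{Step 2: maximality.} Let $Z$ be any fuzzy directed simulation between $\mM$ and $\mMp$. By Lemma~\ref{lemma: HDJHA}, $M_Z$ is a correct marking. Since $M$ is the greatest correct marking, $M_Z \leq M$, and restricting to $\Vmin$ gives $Z \leq Z_0$.

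The only delicate point is Step~1: checking that the edge structure of $\mN$ makes the stable-marking equation \eqref{eq: JHDSD 5} reproduce exactly the compositions appearing in the definition of $M_Z$. This requires bookkeeping of which edges into a max-node are positive, and it is where disjointness of $\DeltaM$ and $\DeltaMp$ prevents the four edge families in the definition of $E$ from overlapping.

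An alternative for Step~1 would avoid relying on stability. One shows directly that $M_{Z_0} \geq M$ on max-nodes, using correctness condition \eqref{eq: JHDSD 3} for $M$. Then \eqref{eq: JHDSD 2} for $M_{Z_0}$ follows by antitonicity of $\fto$ in its second argument replaced by monotonicity, and \eqref{eq: JHDSD 3} for $M_{Z_0}$ holds with equality by construction. Either route works; I would present the stability argument since it is shortest.
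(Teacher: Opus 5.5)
Your proof is correct and follows the paper's argument: stability of the greatest correct marking shows that $M$ coincides with the marking that Lemma~\ref{lemma: HDJHA} assigns to $M|_{\Vmin}$, and the lemma then gives both that $M|_{\Vmin}$ is a fuzzy directed simulation and that it is the greatest one. Your Step~2 derives $Z \leq M|_{\Vmin}$ directly from $M_Z \leq M$ for an arbitrary fuzzy directed simulation $Z$, which is slightly tighter than the paper's contradiction argument; taken literally, the paper's argument only rules out a fuzzy directed simulation strictly above $M|_{\Vmin}$, not an incomparable one.
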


The proof of this theorem relies on Lemma~\ref{lemma: HDJHA} and is presented in the appendix.

\newcommand{\ProofTheoremHDFMX}{
\begin{proof}[of Theorem~\ref{theorem: HDFMX}]
This proof is adapted from the proof of \cite[Theorem~12]{DBLP:journals/tfs/NguyenMS23} (which concerns fuzzy bisimulations between fuzzy labeled graphs).

Let $Z = M|_{\Vmin}$. By~\eqref{eq: JHDSD 5} and the fact that the greatest correct marking of a minimax net is stable~\cite[Proposition~5]{DBLP:journals/tfs/NguyenMS23}, for every $x,y \in \DeltaM$, $x',y' \in \DeltaMp$ and $\varrho \in \SA$, we have 
\begin{eqnarray*}
	M(\tuple{x',y,\varrho}) & = & \!\bigvee_{y' \in \DeltaMp}\! \left(\varrho^\mMp(x',y') \fand Z(y,y')\right), \\ 
	M(\tuple{x,y',\varrho}) & = & \bigvee_{y \in \DeltaM} \left(\varrho^\mM(x,y) \fand Z(y,y')\right),
\end{eqnarray*}
which mean
\begin{eqnarray*}
	M(\tuple{x',y,\varrho}) & = & (\varrho^\mMp \circ Z^-)(x',y), \\
	M(\tuple{x,y',\varrho}) & = & (\varrho^\mM \circ Z)(x,y').
\end{eqnarray*}
Thus, $Z$ and $M$ satisfy the conditions stated in Lemma~\ref{lemma: HDJHA}. Therefore, $Z$ is a fuzzy directed simulation between $\mM$ and $\mMp$. For a contradiction, assume that there exists a fuzzy directed simulation $Z_2$ between $\mM$ and $\mMp$ such that $Z \lneq Z_2$. Let $M_2$ be defined as $M$ in Lemma~\ref{lemma: HDJHA} but using $Z_2$ instead of~$Z$. By Lemma~\ref{lemma: HDJHA}, $M_2$ is a correct marking of $\mN$, and we have $M \lneq M_2$, which contradicts the assumption that $M$ is the greatest correct marking of~$\mN$. Therefore, $Z$ is the greatest fuzzy directed simulation between~$\mM$ and~$\mMp$.
\myend
\end{proof}
} 

\begin{example}\label{example: JHRJS}
Suppose that the underlying residuated lattice is the G\"odel structure. 
Reconsider the fuzzy Kripke models $\mM$ and $\mMp$ specified in Example~\ref{example: JHRJA} and illustrated in Figure~\ref{fig: HFJSK}, and the minimax net $\mN = \tuple{\Vmin, \Vmax, E, \Mz}$ corresponding to $\tuple{\mM,\mM'}$ via directed simulation, illustrated in Figure~\ref{fig: JTKXH} and discussed in Example~\ref{example: JHIUS}. We have 
\[ \Mz(\tuple{v,v'}) = (0.7 \fto 0.5) = 0.5. \] 
Suppose that $M$ is the greatest correct marking of $\mN$. By~\eqref{eq: JHDSD 3}, we obtain
\begin{eqnarray*}
&& M(\tuple{v',v,\varrho}) \leq E(\tuple{v,v'}, \tuple{v',v,\varrho}) \fand M(\tuple{v,v'}) = 0.7 \fand 0.5 = 0.5, \\
&& M(\tuple{u',v,\varrho}) \leq E(\tuple{v,v'}, \tuple{u',v,\varrho}) \fand M(\tuple{v,v'}) = 0.8 \fand 0.5 = 0.5.
\end{eqnarray*}
Consequently, by~\eqref{eq: JHDSD 2}, we have
\begin{eqnarray*}
&& M(\tuple{w,v'}) \leq E(\tuple{v',v,\varrho}, \tuple{w,v'}) \fto M(\tuple{v',v,\varrho}) \leq (0.8 \fto 0.5) = 0.5, \label{eq: JUZJA 1} \\
&& M(\tuple{u,u'}) \leq E(\tuple{u',v,\varrho}, \tuple{u,u'}) \fto M(\tuple{u',v,\varrho}) \leq (0.7 \fto 0.5) = 0.5. \label{eq: JUZJA 2}
\end{eqnarray*}
These inequalities, together with~\eqref{eq: JHDSD 1}, imply that 
\begin{equation}\label{eq: JHFJS}
M|_{\Vmin} \leq \{\tuple{v,v'}\!:\!0.5, \tuple{w,v'}\!:\!0.5, \tuple{u,u'}\!:\!0.5\}.
\end{equation}
By~\eqref{eq: JHDSD 3}, it follows that
\begin{align}
M|_{\Vmax} \leq \{ & \tuple{u,v',\varrho}\!:\!0.5, \tuple{v,v',\varrho}\!:\!0.5, \tuple{v',v,\varrho}\!:\!0.5, \tuple{w,v',\varrho}\!:\!0.5, \nonumber \\ 
                   & \tuple{v',w,\varrho}\!:\!0.5, \tuple{u',v,\varrho}\!:\!0.5, \tuple{u',w,\varrho}\!:\!0.5 \}. \label{eq: JHRJA}
\end{align}
It is straightforward to verify that the union of the fuzzy sets on the right hand sides of the inequalities~\eqref{eq: JHFJS} and~\eqref{eq: JHRJA} is a correct marking of~$\mN$ (since it satisfies~\eqref{eq: JHDSD 4} and~\eqref{eq: JHDSD 5}). This, together with~\eqref{eq: JHFJS}, implies that 
\begin{equation*}\label{eq: JHFJS 2}
M|_{\Vmin} =  \{\tuple{v,v'}\!:\!0.5, \tuple{w,v'}\!:\!0.5, \tuple{u,u'}\!:\!0.5\}.
\end{equation*}
By Theorem~\ref{theorem: HDFMX}, it follows that $Z = M|_{\Vmin} =  \{\tuple{v,v'}\!:\!0.5, \tuple{w,v'}\!:\!0.5, \tuple{u,u'}\!:\!0.5\}$ is the greatest fuzzy directed simulation between~$\mM$ and~$\mMp$. This is consistent with Example~\ref{example: JHRJA}.
\myend
\end{example}

\section{Implementation and performance tests}
\label{section: impl and performance}

In view of Theorem~\ref{theorem: HDFMX}, to compute the greatest fuzzy directed simulation between two finite fuzzy Kripke models $\mM$ and $\mMp$, we can proceed as follows. First, construct the minimax net $\mN = \tuple{\Vmin, \Vmax, E, \Mz}$ corresponding to $\tuple{\mM,\mM'}$ via directed simulation. 
\revisedA{(Here, one can discard the assumption that $\mM$ and $\mMp$ have disjoint domains and apply the modifications mentioned in Remark~\ref{remark: JHZHA}.)} 
Next, compute the greatest correct marking $M$ of $\mN$, and finally return the restriction $M|_{\Vmin}$. The computation of this marking is the central step. 
When the underlying residuated lattice is the product or \L{}ukasiewicz structure, this computation can be carried out using the algorithms from~\cite{NguyenMS23-supplement}. These algorithms have exponential time complexity in general (when $\mN$ is cyclic), but incorporate important optimizations. 
In contrast, when the underlying residuated lattice is the G\"odel structure, the computation of the greatest correct marking can be performed using an algorithm with a polynomial time complexity~\cite{CompMinimaxG}. 

Based on this approach, we have developed a Python package~\cite{FDSML-prog} for computing the greatest fuzzy directed simulation between two finite fuzzy Kripke models over the G\"odel, product, and \L{}ukasiewicz structures. The package implements the aforementioned algorithms from~\cite{NguyenMS23-supplement} and incorporates the previously developed implementation of the aforementioned algorithm from~\cite{CompMinimaxG}. The implementation of the algorithms from~\cite{NguyenMS23-supplement} uses a tolerance parameter $\varepsilon$, whose default value is $10^{-12}$. Two fuzzy values whose difference is smaller than $\varepsilon$ are treated as equal. \revisedA{This tolerance parameter can easily be changed by the user, for example, to~$0$.}

The package provides, in the module {\em experiments1.py}, the function
{\small\markRevisedA 
\begin{verbatim}
    computeGreatestBisimulationOrSimulationOrDirectedSimulation(
        G, G', purpose, structure, precision)
\end{verbatim}
}

\noindent
which computes the greatest fuzzy bisimulation, simulation, or directed simulation between two finite fuzzy graphs $G$ and $G'$, depending on the value of the parameter {\em purpose}, which belongs to the set
\{``bisimulation'', ``simulation'', ``directed simulation''\}. The parameter {\em structure} specifies the underlying residuated lattice and takes one of the values `G' (G\"odel), `P' (product), or `L' (\L{}ukasiewicz). \revisedA{The parameter {\em precision} specifies the tolerance parameter $\varepsilon$ mentioned above. It is used only when {\em structure} is `P' or `L', and its default value is $10^{-12}$.} 
The function returns a triple $(Z,t_1,t_2)$, where $Z$ is a Python dictionary representing the computed fuzzy relation, $t_1$ is the time (in seconds) required to construct the minimax net corresponding to $\tuple{G,G'}$, and $t_2$ is the time required to compute its greatest correct marking. Consequently, $t_1+t_2$ is the total running time of the computation.

Fuzzy Kripke models are represented as fuzzy graphs and can be stored in text files. For example, the fuzzy Kripke model $\mM$ from Example~\ref{example: JHRJA} is stored in the file ``g5.in'' included in~\cite{FDSML-prog}, whose contents are

{\small
\begin{verbatim}
    u p 1
    v q 0.7
    w q 0.4

    u v r 0.7
    u w r 0.9
    v w r 0.6
    w v r 0.8
\end{verbatim}
}

\noindent
The model can be loaded into memory by
{\small 
\begin{verbatim}
    G = RawFuzzyGraph.fromFile("g5.in")
\end{verbatim}
}

Most details of Examples~\ref{example: JHRJA}, \ref{example: JHIUS}, and~\ref{example: JHRJS} can be reproduced by running the modules {\em CompMinimaxG.py} and {\em CompMinimaxPL.py} from~\cite{FDSML-prog} and examining the output corresponding to the input files ``g5.in'' and ``g5p.in''. Additional information can be displayed by modifying the scripts appropriately.

\begin{figure}
\includegraphics[scale=0.66]{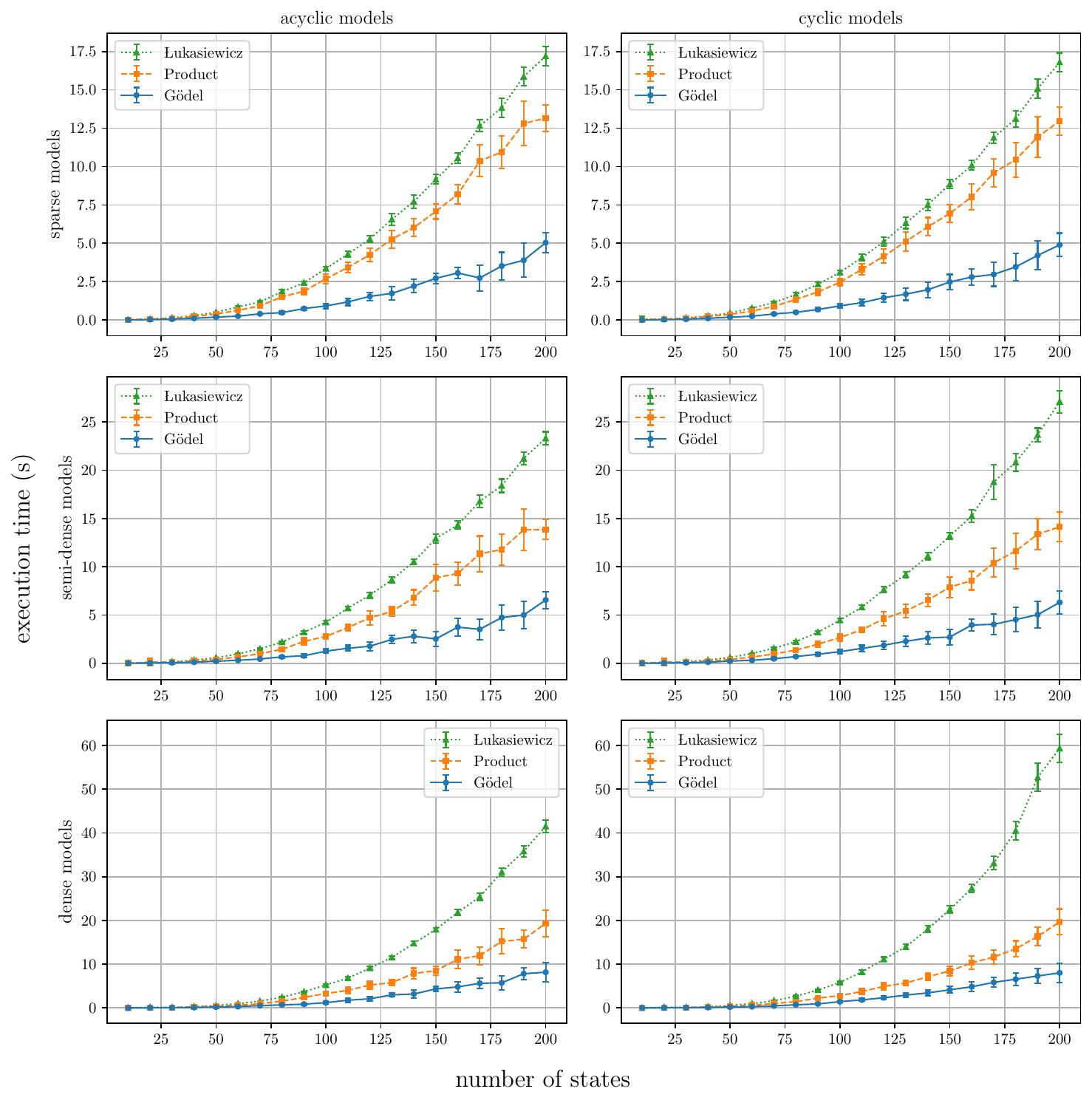}
\caption{Results of the performance tests -- Part I\label{fig: JHDJW1}}
\end{figure}

\begin{figure}
\centering
\resizebox{0.66\width}{0.5\height}{\includegraphics{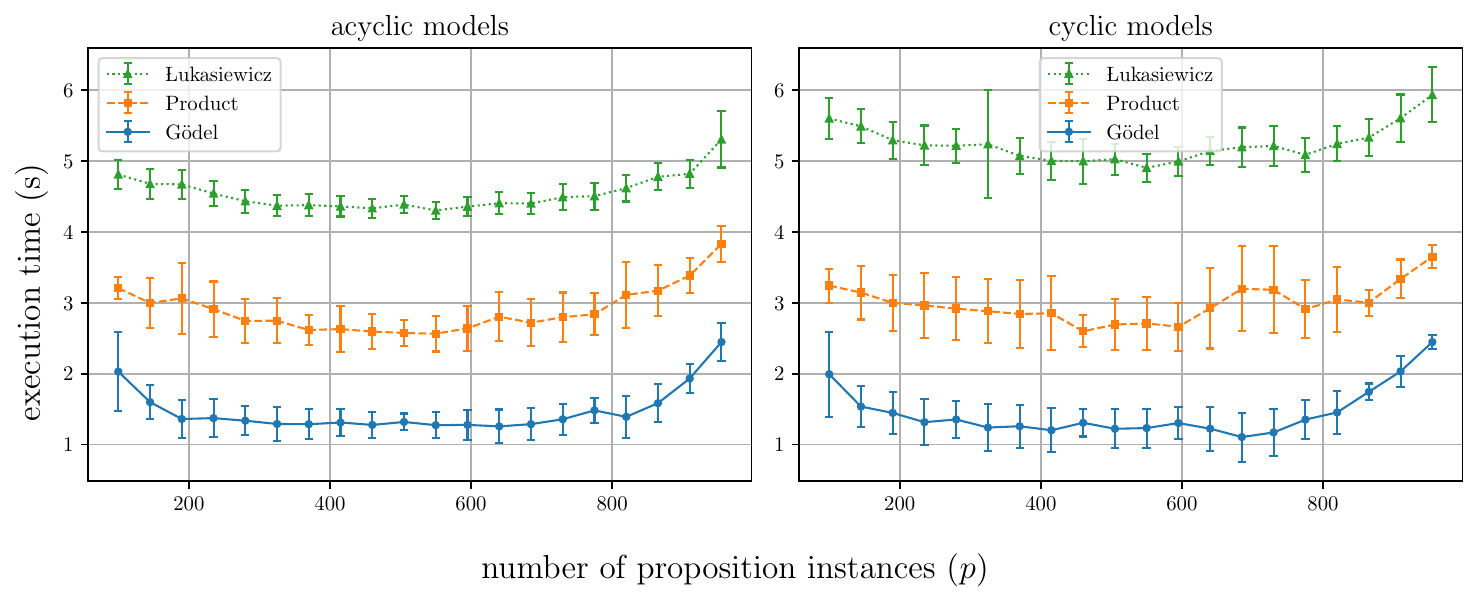}}
\smallskip

\resizebox{0.66\width}{0.45\height}{\includegraphics{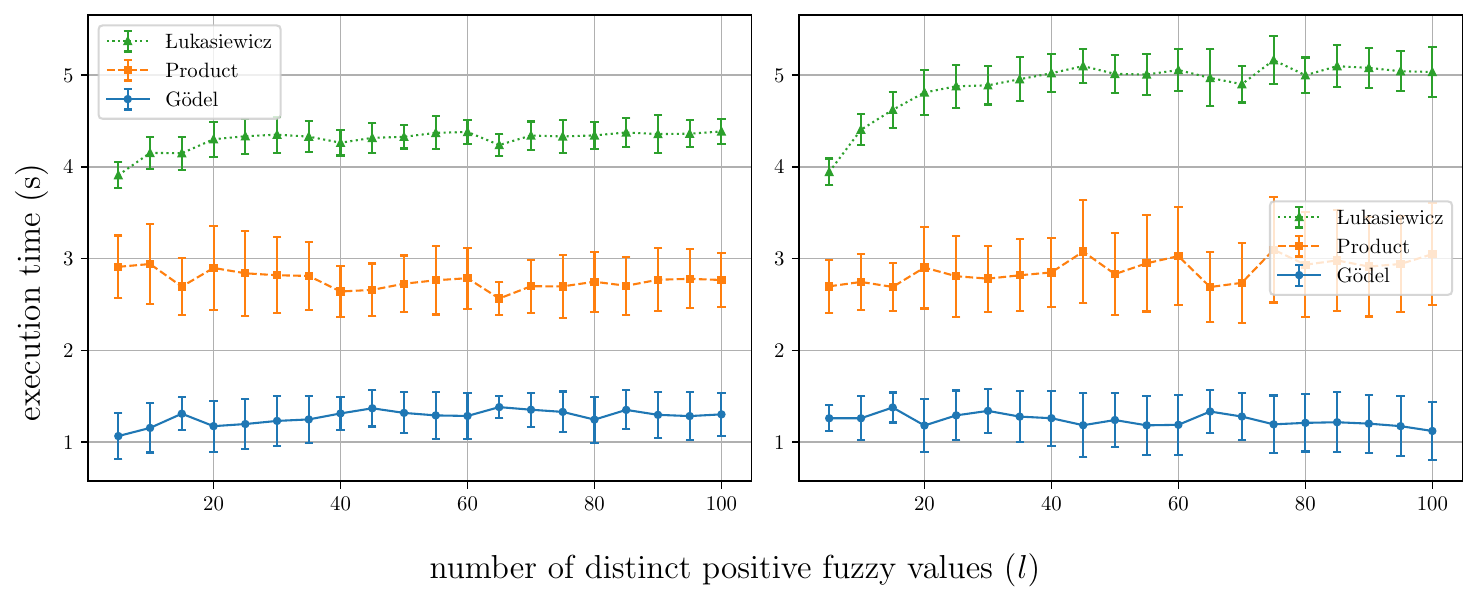}}
\smallskip

\resizebox{0.66\width}{0.45\height}{\includegraphics{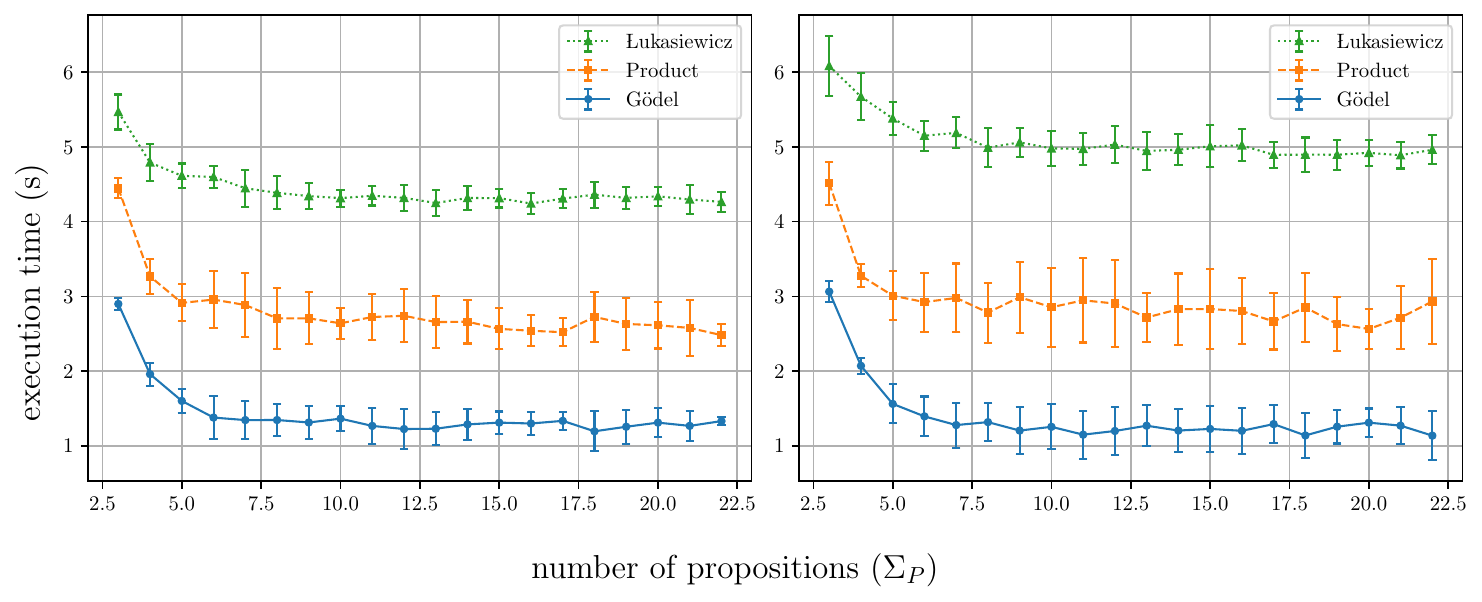}}
\smallskip

\resizebox{0.66\width}{0.45\height}{\includegraphics{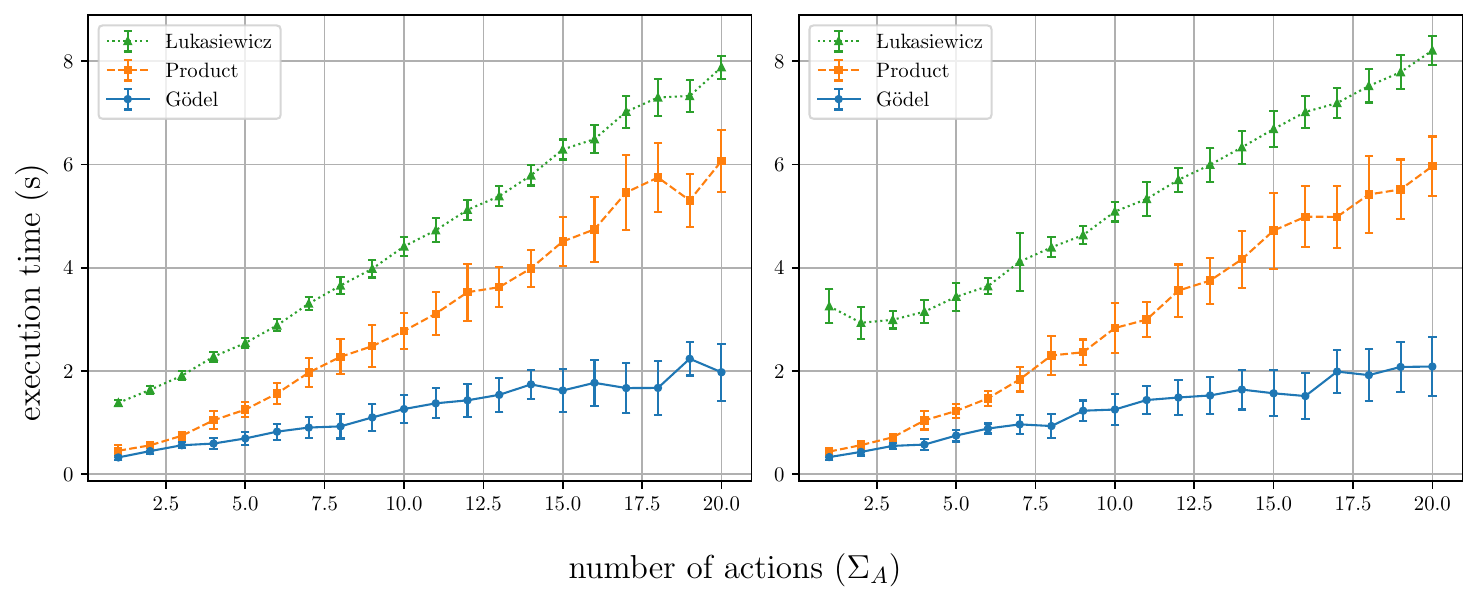}}
\caption{Results of the performance tests -- Part II\label{fig: JHDJW2}}
\end{figure}

We evaluated the performance of our implementation for computing the greatest fuzzy directed simulation between two finite fuzzy Kripke models over the G\"odel, product, and \L{}ukasiewicz structures. The experiments use randomly generated fuzzy Kripke models characterized by the following parameters:
\begin{itemize}
\item $n$: the number of states, $|\DeltaM|$ (default value: $100$);
\item $m$: the number of transitions,
$\#\{\tuple{x,\varrho,y}\in\DeltaM\times\SA\times\DeltaM\mid\varrho^\mM(x,y)>0\}$
(default value: $800$);
\item $p$: the number of non-zero proposition instances,
$\#\{\tuple{p,x}\in\SP\times\DeltaM\mid p^\mM(x)>0\}$
(default value: $3n$);
\item $l$: the number of distinct positive fuzzy values used in $\mM$ (default value: $100$);
\item $|\SP|$: the number of propositions (default value: $10$);
\item $|\SA|$: the number of actions (default value: $10$);
\item $\mathit{acyclic}$: whether $\mM$ is acyclic \revisedA{(disregarding self-loops)}.
\end{itemize}
\revisedA{The random instances were generated using uniform distributions. For example, a transition $\tuple{x,\varrho,y}$ is generated by drawing $x$ and $y$ uniformly at random from the set of states, drawing $\varrho$ uniformly at random from $\SA$, and choosing $\varrho^\mM(x,y)$ uniformly at random from the set $\{i/l \mid$ $i \in \NN$, $0 \leq i \leq l\}$. For acyclic models, we exchange $x$ and $y$ whenever $x > y$, assuming that the set of states is linearly ordered (in fact, we have used integers to identify states).}

All experiments were performed using a standard user account on a virtualized Linux server running Debian GNU/Linux~12 (Bookworm). The server was equipped with 64 logical Intel Broadwell CPU cores operating at approximately 2.2~GHz and 184~GiB of RAM. The user account was subject to standard resource controls. 

Figures~\ref{fig: JHDJW1} and~\ref{fig: JHDJW2} summarize the experimental results. In both figures, the left and right columns correspond to acyclic and cyclic models, respectively, and the vertical axis represents the running time in seconds. 
In Figure~\ref{fig: JHDJW1}, the horizontal axis represents the number of states ($n$) in each of the two input models, and the three rows correspond to sparse ($m=3n$), semi-dense ($m=n\log n$), and dense ($m=n^2/10$) models, respectively. 
In Figure~\ref{fig: JHDJW2}, the horizontal axis in the subsequent rows represents, respectively, the number of proposition instances ($p$), the number of distinct positive fuzzy values ($l$), the number of propositions ($|\SP|$), and the number of actions ($|\SA|$). Thus, each group of experiments studies the influence of a single parameter while all remaining parameters are kept at their default values. The only exception is that, when $n$ varies, $m$ is adjusted according to the three density settings.
Each experiment was repeated \revisedA{30} times, and the reported running time is the \revisedA{mean, together with the corresponding standard deviation,} over these repetitions. 

The experimental results lead to the following observations:
\begin{itemize}
\item The proposed method scales well with respect to the parameters $p$, $l$, $|\SP|$, and $|\SA|$ \revisedA{in our experimental setting}.

\item Although the worst-case time complexity is exponential for cyclic models over the product and \L{}ukasiewicz structures, the optimization techniques incorporated into the algorithms for computing the greatest correct marking of a fuzzy minimax net~\cite{NguyenMS23-supplement} provide substantial practical speedups. In particular, for the product structure, the running times for cyclic models are comparable to those for acyclic models. This behavior is due, in part, to the use of the tolerance parameter~$\varepsilon$.
\end{itemize}

The performance tests can be reproduced by running the modules {\em experiments1.py} and {\em process$\_$results1.py} (in this order) of~\cite{FDSML-prog}. 
\revisedA{The former generates a file {\em results1.txt}, whereas the latter processes this file and produces plots. Such plots have been used to create Figures~\ref{fig: JHDJW1} and~\ref{fig: JHDJW2}.}
 
\section{Conclusions}
\label{section: conc}

{\markRevisedA
We have introduced fuzzy directed simulations between fuzzy Kripke models over linear and complete residuated lattices and established their logical characterizations. All results presented in Section~\ref{section: fbir}, including Theorem~\ref{theorem: preservation} (on preservation of positive formulas under fuzzy directed simulations) and Theorem~\ref{theorem: HM} (on the Hennessy--Milner property for fuzzy directed simulations), are new. Some parts of their proofs are obtained by adapting the proofs from~\cite{FBSML}, which concern fuzzy bisimulations rather than fuzzy directed simulations. However, the part concerning condition~\eqref{eq: FDS3} in the proof of Theorem~\ref{theorem: HM} is entirely new and is crucial for handling positive formulas and fuzzy directed simulations.

We have also presented the first method for computing the greatest fuzzy directed simulation between two finite fuzzy Kripke models. It is based on a reduction to the problem of computing the greatest correct marking of a finite fuzzy minimax net~\cite{DBLP:journals/tfs/NguyenMS23}. The method has been used and acknowledged in~\cite{CompMinimaxG} when specialized to the case where the underlying residuated lattice is the G\"odel structure, in order to obtain an efficient algorithm for this case.

Furthermore, we have implemented that method for the case where the underlying residuated lattice is the G\"odel, product, or \L{}ukasiewicz structure, experimentally evaluated its performance, and presented the obtained results.
} 

This work fills the existing gap in the picture of bisimulations, simulations, and directed simulations for fuzzy systems. As future work, we intend to exploit crisp and fuzzy directed simulations to study the minimization of fuzzy Kripke models while preserving positive modal formulas.

{\markRevisedA
\section*{Acknowledgments}

We would like to thank an anonymous reviewer for useful comments, which have helped us improve the quality of this paper.
}


\biboptions{sort&compress}
\bibliography{BSfDL}
\bibliographystyle{elsarticle-harv}

\appendix

\section{Proofs}
\label{section: appendix}

In this appendix, we present the proofs of the results from Sections~\ref{section: fbir} and~\ref{section: computation}. Most of them are obtained by adapting the proofs from~\cite{FBSML,DBLP:journals/tfs/NguyenMS23}, which deal with fuzzy bisimulations rather than fuzzy directed simulations. The exceptions are the proofs of Proposition~\ref{prop: HFHSJ} and Theorem~\ref{theorem: HM}. While the adaptations involve nontrivial modifications, the part concerning condition~\eqref{eq: FDS3} in the proof of Theorem~\ref{theorem: HM} is entirely new and is crucial for handling positive formulas and fuzzy directed simulations.

\ProofPropositionHFHS

\ProofLemmaGDHAW

\ProofTheoremHM

\ProofLemmaHDJHA

\ProofTheoremHDFMX


\end{document}